\newcounter{mynotes}
\renewcommand{\cref}{\Cref}
\declaretheorem[within=section]{theorem}
\declaretheorem[sibling=theorem]{corollary}
\declaretheorem[sibling=theorem]{lemma}
\declaretheorem[sibling=theorem]{claim}
\newcounter{termcounter}
\renewcommand{\thetermcounter}{\Alph{termcounter}}
\crefname{term}{term}{terms}
\def\term{\@ifnextchar[\term@optarg\term@noarg}
\def\term@optarg[#1]#2{%
  \textup{(#1)}%
  \def\@currentlabel{#1}%
  \def\cref@currentlabel{[][2147483647][]#1}%
  \cref@label[term]{#2}}
\def\term@noarg#1{%
  \refstepcounter{termcounter}%
  \textup{(\thetermcounter)}%
  \cref@label[term]{#1}}
\newcommand{\nfrac}{\nicefrac}
\newcommand{\ignore}[1]{}
\newcommand{\bits}{\{0,1\}}
\newcommand{\zo}{\bits}
\newcommand{\poly}{\mathrm{poly}}
\definecolor{DSred}{rgb}{1,0,0}
\renewcommand{\leq}{\leqslant}
\renewcommand{\geq}{\geqslant}
\renewcommand{\epsilon}{\varepsilon}
\newcommand{\eps}{\epsilon}
\newcommand{\N}{\mathbb{N}}
\newcommand{\cA}{\mathcal A}
\newcommand{\cC}{\mathcal C}
\newcommand{\cD}{\mathcal D}
\newcommand{\cS}{\mathcal S}
\newcommand{\Psymb}{{\bf Pr}}
\DeclareMathOperator*{\ProbOp}{\Psymb}
\renewcommand{\Pr}{\ProbOp}
\newcommand{\lpn}{$k$-\textsc{LPN}\xspace}
\newcommand{\park}{\textsc{PAR}($k$)\xspace}
\title{On learning $k$-parities with and without noise}
\author{Arnab Bhattacharyya\thanks{Supported in part by DST Ramanujan Fellowship. Email: \href{}{\texttt{arnabb@csa.iisc.ernet.in}}} \qquad 
Ameet Gadekar\thanks{Email: \href{}{\texttt{ameet.gadekar@csa.iisc.ernet.in}}}
\qquad Ninad Rajgopal\thanks{Email: \href{}{\texttt{ninad.rajgopal@csa.iisc.ernet.in}}} \\[1em]
Department of Computer Science \& Automation \\
{Indian Institute of Science}}
\begin{document}
\maketitle

\begin{abstract}
We first consider the problem of learning $k$-parities in the on-line mistake-bound  model: given a hidden vector $x \in \zo^n$ with $|x|=k$ and a sequence of ``questions" $a_1, a_2, \dots \in \zo^n$, where the algorithm must reply to each question with $\langle a_i, x\rangle \pmod 2$, what is the best tradeoff between the number of mistakes made by the algorithm and its time complexity? We improve the previous best result of Buhrman et.~al.~\cite{BGM10} by an $\exp(k)$ factor in the time complexity.

Second, we consider the problem of learning $k$-parities in the presence of  classification noise of rate $\eta \in (0,\nfrac12)$.  A polynomial time algorithm for this problem (when   $\eta > 0$ and $k = \omega(1)$) is a longstanding challenge in  learning theory. Grigorescu et al.~\cite{GRV11} showed an algorithm   running in time ${n \choose k/2}^{1 + 4\eta^2 +o(1)}$. Note that  this algorithm inherently requires time ${n \choose k/2}$ even when  the noise rate $\eta$ is polynomially small. We observe that for sufficiently small noise rate, it is possible to break the $n \choose k/2$ barrier. In particular, if for some function $f(n) = \omega(1)$ and $\alpha \in [\nfrac{1}{2}, 1)$,  $k = n/f(n)$ and $\eta = o(f(n)^{- \alpha}/\log n)$, then there is an algorithm for the problem with running time $ \poly(n)\cdot {n \choose k}^{1-\alpha} \cdot e^{-k/4.01}$.
\end{abstract}

\section{Introduction}\label{sec:intro}
By now, the ``Parity Problem" of Blum, Kalai and Wasserman \cite{BKW03} has acquired widespread notoriety. The question is simple enough to be in our second sentence: in order to learn a hidden vector $x \in \zo^n$, what is the least number of random examples $(a, \ell)$ that need to be seen, where $a$ is uniformly chosen from $\zo^n$ and $\ell = \sum_i a_i x_i \pmod 2$ with probability at least $1-\eta$? Information-theoretically, $x$ can be recovered only after $O(n)$ examples, even if the noise rate $\eta$ is close to $1/2$. But if we add the additional constraint that the running time of the learning algorithm be minimized, the barely subexponential running time of \cite{BKW03}'s algorithm, $2^{O(n/\log n)}$, still holds the record of being the fastest known for this problem!

Learning parities with noise is a central problem in theoretical computer science. It has incarnations in several different areas of computer science, including coding theory as the problem of learning random binary linear codes and cryptography as the ``learning with errors" problem that underlies lattice-based cryptosystems \cite{Regev09, BV11}. In learning theory, the special case of the problem where the hidden vector $x$ is known to be supported on a set of size $k$ much smaller than $n$ has great relevance. We refer to this problem as {\em learning $k$-parity with noise} or \lpn.
Feldman et al. \cite{FGKP09} showed that learning $k$-juntas, as well as learning $2^k$-term DNFs from uniformly random examples and variants of these problems in which the noise is adversarial instead of random, all reduce to the \lpn problem. For the \lpn problem, the current record is that of Grigorescu, Reyzin and Vempala \cite{GRV11} who showed a learning algorithm that succeeds with constant probability, takes ${n \choose k/2}^{1+(2\eta)^2+o(1)}$ time and uses $\frac{k \log n}{(1-2\eta)^2} \cdot \omega(1)$ samples. When the noise rate $\eta$ is close to $1/2$, this running time is improved by an algorithm due to G.~Valiant \cite{Valiant12} that runs in time $n^{0.8k} \cdot \poly(\frac{1}{1-2\eta})$. It is a wide open challenge to find a polynomial time algorithm for \lpn for growing $k$ or to prove a negative result.

Another outstanding challenge in machine learning is the problem of learning parities without noise in the ``attribute-efficient" setting \cite{Blum96}. The algorithm is given access to a source of examples $(a,\ell)$ where $a$ is chosen uniformly from $\zo^n$ and $\ell = \sum_i a_i x_i \pmod 2$ with no noise, and the question is to learn $x$ while simultaneously reducing the time complexity and the number of examples drawn by the algorithm. Again, we focus on the case where $x$ has sparsity $k \ll n$. Information-theoretically, of course, $O(k \log n)$ examples should be sufficient, as each linearly independent example reduces the number of consistent $k$-parities by a factor of $2$. But the fastest known algorithm making $O(k \log n)$ samples runs in time $\tilde{O}({n \choose k/2})$ \cite{KS06}, and it is open whether there exists a polynomial time algorithm for learning parities that is attribute-efficient, i.e. it makes $\poly(k \log n)$ samples. Buhrman, Garc{\'i}a-Soriano and Matsliah \cite{BGM10} give the current best tradeoffs between the sample complexity and running time for learning parities in this noiseless setting. Notice that with $O(n)$ samples, it is easy to learn the $k$-parity in polynomial time using Gaussian elimination.

\subsection{Our Results}

We first study the noiseless setting. Our main technical result is an improved tradeoff between the sample complexity and runtime for learning parities. 

\begin{theorem}\label{thm:pac}
Let $k, t: \N \to \N$ be two functions\footnote{We assume throughout, as in \cite{BGM10}, that $k$ and $t$ are constructible in quadratic time.} satisfying $\log \log n \ll k(n) \ll t(n) \ll n$. For any $\delta > 0$, there is an algorithm that learns the concept class of $k$-parities on $n$ variables with confidence parameter $\delta$, using $O(kn/t + \log {t \choose k} + \log(1/\delta))$ uniformly random examples and $e^{-k/4.01} {t \choose k} \cdot \poly(n)\cdot \log(1/\delta)$ running time\footnote{The ``4.01" can be replaced by any constant more than $4$. }.
\end{theorem}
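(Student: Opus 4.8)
The plan is to follow the standard "guess a window that contains the support, then solve a smaller problem" paradigm, but to make the small-problem step be a *randomized sieving/meet-in-the-middle* procedure rather than brute-force enumeration over all $\binom{t}{k}$ candidates, which is where the $e^{-k/4.01}$ savings should come from. First I would partition the $n$ coordinates into $O(n/t)$ blocks of size roughly $t$, and observe that since $|x|=k$ there is at least one block capturing "many" support coordinates; more carefully, by a random partition into blocks of size $t$ one can arrange that with good probability some block $B$ contains all of the support, OR — and this is the version that gives the clean sample bound — one restricts attention to a single block at a time and spends $O(kn/t)$ examples total to first pin down, via the responses, which coordinates outside the current block are in the support (there are $O(n/t)$ blocks, and for each we need $O(k)$ examples to verify consistency). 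The upshot of this bookkeeping is that we reduce, using $O(kn/t + \log\binom{t}{k} + \log(1/\delta))$ examples, to the problem of learning a $k$-parity supported inside a universe of size $t$, where we are now allowed $\operatorname{poly}(n)$ time per candidate but want to beat the trivial $\binom{t}{k}$ enumeration.

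For the reduced problem on $t$ variables, the key step is to use a collection of $O(\log\binom{t}{k} + \log(1/\delta))$ fresh random examples as a *fingerprint*: a wrong $k$-sparse hypothesis agrees with all of them with probability at most $\binom{t}{k}^{-1}\delta$, so any hypothesis consistent with all fingerprint examples is correct with confidence $\delta$. The running time then comes down to: how fast can we enumerate the $k$-sparse vectors consistent with a given small batch of linear constraints over $\mathbb{F}_2$? Here I would split the $t$ coordinates into two halves and, for each half, enumerate the $\binom{t/2}{k/2}$-ish ways of placing half the support, hashing the partial inner-product vectors; a collision across the two halves yields a candidate. The arithmetic that makes this work in the claimed bound is the entropy/Stirling estimate $\binom{t}{k/2}\cdot(\text{something}) \approx e^{-k/4.01}\binom{t}{k}$: concretely $\binom{t}{k/2}^2 \le \binom{t}{k} \cdot 2^{-\Theta(k)}$ is false in general, so the real gain must instead be squeezed out of the fact that $k \ll t$, where $\binom{t}{k/2}/\binom{t}{k}^{1/2} \to 0$ like $(k/t)^{k/4}$, and one balances the two halves unevenly (sizes $\beta t$ and $(1-\beta)t$, support split $\gamma k$ and $(1-\gamma)k$) to push the dominant term down to $e^{-k/4.01}\binom{t}{k}$. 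I expect the bulk of the write-up to be this optimization of the split together with the Stirling bookkeeping.

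The confidence amplification is routine: run the whole procedure $O(\log(1/\delta))$ times with independent randomness for the partition and the fingerprints, output the majority hypothesis, and absorb the $\log(1/\delta)$ factors into the stated sample and time bounds; the hypotheses of successful runs all equal $x$, so a single success certified by the fingerprint suffices and we can stop early. One also checks the side conditions $\log\log n \ll k \ll t \ll n$ are exactly what is needed: $k \gg \log\log n$ makes $e^{-k}$ meaningfully small relative to the $\operatorname{poly}(n)$ overhead, $t \gg k$ is what makes the uneven meet-in-the-middle beat brute force, and $t \ll n$ is what keeps the $O(kn/t)$ "locate the support" sample cost from dominating.

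The hard part will be the meet-in-the-middle / sieving step and its cost analysis: one must argue that the collision-finding over $\mathbb{F}_2^{O(\log\binom{t}{k})}$ hash values does not blow up the running time (spurious collisions must be rare enough, which needs the fingerprint dimension chosen correctly and a second independent batch of examples to filter false positives), and simultaneously that the chosen unbalanced split of $t$ and of $k$ drives the enumeration cost down to exactly $e^{-k/4.01}\binom{t}{k}$ and no worse — getting the constant to be "any constant $>4$" rather than something weaker is precisely a delicate Stirling-approximation computation that has to be done carefully in the regime $k \ll t$.
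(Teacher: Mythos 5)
Your proposal takes a genuinely different route from the paper, and as sketched it has two real gaps. The first is the ``reduce to a universe of size $t$'' step. With random examples the label $\langle a,x\rangle$ is a \emph{global} parity, so you cannot locally ``verify consistency'' of a block with $O(k)$ examples in polynomial time, nor pin down which coordinates outside a block lie in the support --- identifying the relevant blocks is essentially the attribute-efficient parity problem itself; moreover the support of $x$ is in general spread across many blocks, so no single window of size $t$ contains it. The paper (following \cite{BGM10}) never performs such a reduction: it partitions the coordinates into $T=\alpha t$ blocks of size $\lceil n/T\rceil$, fixes a family of $m$ candidate linear subspaces, each spanned by the union of $\alpha k$ blocks (dimension $(1+o(1))kn/t$, which is exactly where the $O(kn/t)$ sample term comes from), chosen so that every placement of the $k$ support blocks lies in some subspace (\cref{clm:simp}), and then runs a halving-style mistake-bound algorithm (predict the label that retains the majority of $\sum_i |N_i|$, update all $N_i$ by Gaussian elimination), converting to PAC by the standard conversion theorem; the $\log\binom{t}{k}$ sample term is the logarithm of the number of subspaces.

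The second gap is the claimed source of the $e^{-k/4.01}$ saving. Your Stirling claim that $\binom{t}{k/2}/\binom{t}{k}^{1/2}\to 0$ like $(k/t)^{k/4}$ is backwards: for $k\ll t$ one has $\binom{t}{k/2}$ of order $2^{k/2}\binom{t}{k}^{1/2}$ up to polynomial factors, so the ratio diverges rather than vanishes; the meet-in-the-middle cost would in fact be far below $e^{-k/4}\binom{t}{k}$ in this regime, but that is moot because the reduction that would enable it is the missing piece, and its sample cost on $n$-bit examples is unaccounted for. In the paper the exponential saving has a completely different origin, namely a covering-design calculation: $m=\tilde O\bigl(\binom{T}{\alpha k}/\binom{T-k}{\alpha k-k}\bigr)$ random $\alpha k$-subsets of $[T]$ suffice to cover all $k$-subsets, and \cref{lem:calc} shows $\binom{T}{\alpha k}/\binom{T-k}{\alpha k-k}\le e^{-k/4.01}\binom{t}{k}$ for a large enough constant $\alpha$. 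In other words, the gain comes from using fewer but larger candidate subspaces (spanned by $\alpha k$ blocks instead of $k$), which multiplies the subspace dimension --- and hence the $kn/t$ sample term --- by only a constant, while dividing the number of subspaces, and hence the per-round running time, by $e^{\Omega(k)}$. Your write-up would need either to supply a correct reduction with the stated sample budget plus a correct cost analysis of the sieving step, or to switch to an argument of this covering type.
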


Actually, we prove our result in the mistake-bound model \cite{Littlestone89} that is stronger than the PAC model discussed above (in fact, strictly stronger assuming the existence of one-way functions \cite{Blum94}). As a consequence, a theorem of the above form also holds when the examples come from an arbitrary distribution. We defer the statement of the result for the mistake-bound model to \cref{sec:nless}.

For comparison, let us quote the closely related result of Buhrman et. al.:

\begin{theorem}[Theorem 2.1 of \cite{BGM10}]\label{thm:bgm}
Let $k, t: \N \to \N$ be two functions satisfying $k(n) \leq t(n) \leq n$. For any $\delta > 0$, there is an algorithm that learns the concept class of $k$-parities on $n$ variables with confidence parameter $\delta$, using $O(kn/t + \log {t \choose k} + \log(1/\delta))$ uniformly random examples and ${t \choose k} \cdot \poly(n)\cdot \log(1/\delta)$ running time.
\end{theorem}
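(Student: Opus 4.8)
The plan is to establish the stronger statement in the mistake-bound model of \cite{Littlestone89} and then pass to the PAC bound by the standard online-to-batch conversion. Concretely, I would exhibit an online learner for the class of $k$-parities on $n$ variables that makes at most $M=O\!\left(kn/t+\log\binom{t}{k}\right)$ mistakes and spends $\binom{t}{k}\cdot\poly(n)$ time on each round. Two features of $k$-parities are used throughout: any $k$-parity other than the target $x$ disagrees with $x$ on exactly half of $\zo^n$, so a hypothesis with error below $\half$ is \emph{exactly} $x$ and a constant accuracy parameter already forces exact recovery; and the target support has size only $k$, which is what lets the version space be organized cheaply.

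For the online learner I would partition $[n]$ into $r=\lceil n/t\rceil$ blocks $B_1,\dots,B_r$ of size at most $t$ and run a halving-style algorithm whose version space is kept in the factored form $V=V_1\times\cdots\times V_r$ with each $V_g$ a set of candidate restrictions $\supp(x)\cap B_g$, so that $\abs{V_g}\le\binom{t}{\le k}$. The point of the factorization is that prediction becomes cheap: for $z$ uniform on a product set the block contributions $s_g=\iprod{a_{B_g},z_{B_g}}\bmod 2$ are independent, so the majority label of $\iprod{a,z}=\bigoplus_g s_g$ is read off from the sign of $\prod_g(1-2p_g)$, where $p_g=\Pr_{z_g\sim V_g}[s_g=1]$. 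Computing all the $p_g$ costs $\sum_g\abs{V_g}\le (n/t)\binom{t}{\le k}=\binom{t}{k}\cdot\poly(n)$ time, which gives the claimed per-round running time. The two terms of the mistake bound should arise from two different sources: an amortized $O(n/t)$ mistakes to localize each of the $k$ support elements to its block, accounting for the $O(kn/t)$ term, and then a within-block refinement whose total cost is controlled by the global budget $\sum_g\abs{\supp(x)\cap B_g}=k$, accounting for the $O(\log\binom{t}{k})$ term.

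The main obstacle is precisely the interaction between these two goals. An \emph{exact} halving update conditions $V$ on the single global parity constraint $\bigoplus_g s_g=\iprod{a,x}$, and this conditioning does not factor over the blocks --- this is the credit-assignment difficulty that makes parity learning hard, and it threatens both the tractability of the update and the product form on which the cheap prediction relies. Moreover, a naive product-halving analysis only yields the multiplicative mistake bound $O\!\left((n/t)\log\binom{t}{k}\right)$, a logarithmic factor worse than the additive bound we need. The heart of the proof is therefore a charging (or potential) argument that simultaneously keeps the maintained version space in a tractable factored form and shows that mistakes split cleanly into the $O(kn/t)$ localization mistakes and the $O(\log\binom{t}{k})$ refinement mistakes; this is the step I expect to require the most care.

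Finally, I would convert the online learner to a PAC learner by the standard reduction \cite{Littlestone89}: a mistake bound $M$ yields an algorithm drawing $O(M+\log(1/\delta))$ examples, where the additive $\log(1/\delta)$ uses that a wrong $k$-parity has error exactly $\half$ so that a hypothesis surviving $O(\log(1/\delta))$ fresh examples is correct with probability $1-\delta$, and amplifying the confidence multiplies the running time by $\log(1/\delta)$. Putting the pieces together gives sample complexity $O\!\left(kn/t+\log\binom{t}{k}+\log(1/\delta)\right)$ and running time $\binom{t}{k}\cdot\poly(n)\cdot\log(1/\delta)$, matching the statement.
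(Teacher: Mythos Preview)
Your proposal has a genuine gap, and it stems from choosing the partition with the roles of $t$ and $n/t$ reversed. You cut $[n]$ into $r=\lceil n/t\rceil$ blocks of size $t$ and try to keep the version space as a product $V_1\times\cdots\times V_r$. As you yourself note, the parity constraint does not factor across blocks, and the naive halving bound over your product is $O\!\big((n/t)\log\binom{t}{k}\big)$, which is off by a logarithmic factor. The ``charging/potential argument'' you hope will repair this is exactly the missing idea; you never specify it, and there is no evident way to get the additive bound $O(kn/t+\log\binom{t}{k})$ from your factored structure while preserving the product form that your cheap prediction relies on.

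The actual BGM algorithm (as reproduced in the paper) avoids this difficulty entirely by partitioning the other way: cut $[n]$ into $t$ parts $C_1,\dots,C_t$ of size $\lceil n/t\rceil$. The $k$-sparse support of $x$ meets at most $k$ of these parts, so one enumerates the $\binom{t}{k}$ size-$k$ subsets $S\subset[t]$ and for each maintains the affine space $N_S\subseteq\mathrm{span}\bigl(\bigcup_{j\in S}C_j\bigr)$ of vectors consistent with the labels seen so far. Prediction is by weighted majority over $\sum_S|N_S|$; on a mistake this total halves. The initial total is at most $\binom{t}{k}\cdot 2^{k\lceil n/t\rceil}$, giving a mistake bound of $k\lceil n/t\rceil+\log\binom{t}{k}$ directly, with no charging argument. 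Each round costs $\binom{t}{k}$ Gaussian eliminations on systems in $O(kn/t)$ variables, i.e.\ $\binom{t}{k}\cdot\poly(n)$ time. The PAC statement then follows from the standard online-to-batch conversion exactly as you outlined in your last paragraph. The point is that enumerating $\binom{t}{k}$ small subspaces, rather than maintaining one large product, is what makes both the mistake bound and the running time fall out immediately.
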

Thus, in the comparable regime, our \cref{thm:pac} improves the runtime complexity of \cref{thm:bgm} by an $\exp(k)$ factor while its sample complexity remains the same upto constant factors. Note that as $t$ approaches $k$, our algorithm makes $O(n)$ samples and takes $\poly(n)$ time which is the complexity of the Gaussian elimination approach. On the other hand, if $t=n/\log(n/k)$, our algorithm makes $O(k \log(n/k))$ samples and takes\footnote{By $\exp(\cdot)$, we mean $2^{O(\cdot)}$.} $\exp(-k)\cdot {n/k \choose k}$ time (ignoring polynomial factors), compared to the trivial approach which explicitly keeps track of the subset of all the $k$-weight parities consistent with examples given so far and which makes $O(k \log(n/k))$ samples and takes $O({n \choose k})$ time.

We next examine the noisy setting. Here, our contribution is a simple, general observation that does not seem to have been explicitly made before.
\begin{theorem}\label{thm:red}
Given an algorithm $\cA$ that learns \park over the uniform distribution with confidence parameter $\delta$ using $s(\delta)$ samples and running time $t(\delta)$,there is an algorithm $\cA'$ that solves the \lpn problem with noise rate $\eta \in (0,\nfrac13)$, using $O(s(\delta/2)\log(1/\delta))$ examples and running time $\exp(O(H(3\eta/2)\cdot s(\delta/2) \cdot \log(1/\delta)))) \cdot (t(\delta/2) + s(\delta/2)\log(1/\delta))$ and with confidence parameter $\delta$. 
\end{theorem}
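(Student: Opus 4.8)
The plan is to reduce the noisy \lpn problem to the noiseless \park problem by \emph{guessing which of the examples given to $\cA$ are corrupted}. First I would fix the target confidence $\delta$ and run $\cA$ with parameter $\delta/2$, which needs $s = s(\delta/2)$ samples; but since each noisy sample is correct only with probability $1-\eta$, I cannot feed a noisy sample stream directly to $\cA$. Instead I would draw a somewhat larger batch of noisy examples and, for each guess of a subset $S$ of positions declared ``corrupted'', flip the labels in $S$ and run $\cA$ on the resulting cleaned-up stream. If the guess is exactly right, $\cA$ sees a genuine noiseless sample of \park and outputs the true hidden vector $x$; we then verify the candidate $x$ against a fresh set of noisy examples (a candidate consistent with a $(1-\eta)$ fraction, say more than $1-3\eta/2$, of a logarithmically large test set is accepted, and with high probability only the correct $x$ passes while the true $x$ always does).

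The key quantitative point is how many guesses are needed. Since the noise rate is $\eta < 1/3$, the expected number of corrupted examples among $m$ noisy samples is $\eta m$, and by a Chernoff bound all but an exponentially small probability mass is on the event that at most $(3\eta/2) m$ examples are corrupted. So it suffices to enumerate all subsets $S$ of size at most $(3\eta/2) m$, of which there are at most $2^{H(3\eta/2) m}$ by the standard binomial-sum estimate $\sum_{i \le \beta m}\binom{m}{i} \le 2^{H(\beta) m}$ for $\beta \le 1/2$. Taking $m = O(s(\delta/2)\log(1/\delta))$ — enough that with probability $\ge 1-\delta/4$ at least $s(\delta/2)$ of the $m$ examples are uncorrupted \emph{and} the fraction corrupted is $\le 3\eta/2$ — the number of guesses is $\exp(O(H(3\eta/2)\cdot s(\delta/2)\log(1/\delta)))$. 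For each guess we pay one run of $\cA$ plus the relabelling and verification cost, i.e. $t(\delta/2) + O(s(\delta/2)\log(1/\delta))$ time, giving exactly the claimed running time; the sample complexity is the single batch of $m = O(s(\delta/2)\log(1/\delta))$ examples.

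For correctness I would argue: with probability $\ge 1-\delta/2$ the batch contains at least $s(\delta/2)$ uncorrupted examples and has at most $(3\eta/2)m$ corruptions; conditioned on this, one of the enumerated subsets $S$ equals the true corruption set restricted to the first $s(\delta/2)$ uncorrupted positions (more carefully, one choice makes the stream handed to $\cA$ a faithful i.i.d.\ noiseless sample from the uniform distribution, since unflipping exactly the corrupted labels recovers the true inner products), so $\cA$ outputs $x$ with probability $\ge 1-\delta/2$; and the verification step rejects every wrong candidate except with probability $\delta/4$ over the fresh test examples, while never rejecting $x$. A union bound over these failure events gives overall confidence $1-\delta$.

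The main obstacle I anticipate is bookkeeping the interaction between ``enough clean samples for $\cA$'' and ``few enough total corruptions to enumerate'': $\cA$ wants a fixed number $s(\delta/2)$ of truly noiseless examples drawn i.i.d.\ uniform, but we only get to choose which examples to relabel, not which are actually clean, so the reduction must present $\cA$ with a prefix of examples whose labels, after our guessed relabelling, are all correct — and we need the guess space to be simultaneously large enough to contain the right relabelling and small enough to keep the $2^{H(3\eta/2)s(\delta/2)\log(1/\delta)}$ bound. Handling the boundary case $\eta$ close to $1/3$ (where $H(3\eta/2)$ approaches $H(1/2)=1$) and making sure the Chernoff slack and the test-set size are absorbed into the $\log(1/\delta)$ and $O(\cdot)$ factors is the routine-but-delicate part; everything else is a direct guess-and-check argument.
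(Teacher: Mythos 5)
Your proposal matches the paper's proof essentially exactly: draw $O(s(\delta/2)\log(1/\delta))$ noisy examples, enumerate every candidate corruption set of size at most $\tfrac{3}{2}\eta$ times the batch size (at most $2^{H(3\eta/2)\cdot O(s(\delta/2)\log(1/\delta))}$ by the binomial-sum estimate), run $\cA(\delta/2)$ on each relabelled stream, and select the candidate that agrees best with a fresh test set of size $O(H(3\eta/2)s' + \log(1/\delta))$, with Chernoff bounds plus a union bound over all candidates. The only obstacle you flag (guaranteeing $s(\delta/2)$ clean examples) is moot, since when the guessed set equals the true corruption set the \emph{entire} relabelled batch is a faithful noiseless sample, which is precisely how the paper argues.
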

In the above, $H: [0,1] \to [0,1]$ denotes the binary entropy function $H(p)  = p\log_2\frac1p + (1-p) \log_2\frac1{1-p}$. The main conceptual message carried by \cref{thm:red} is that improving the sample complexity for efficient learning of noiseless parity improves the running time for learning of noisy parity. For instance, if we use Spielman's algorithm as $\cA$, reported in \cite{KS06}, that learns $k$-parity using $O(k \log n)$ samples and $O({n \choose k/2})$ running time, we immediately get the following:
\begin{corollary}\label{cor:att}
For any $\eta \in (0,\nfrac13)$ and constant confidence parameter, there is an algorithm for \lpn with sample complexity $O(k \log n)$ and running time ${n \choose k/2}^{1 + O(H(1.5\eta))}$. 
\end{corollary}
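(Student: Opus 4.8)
The plan is to specialize \cref{thm:red} to a concrete choice of the noiseless learner $\cA$. A natural one is Spielman's algorithm for $\park$, as reported in \cite{KS06}: under the uniform distribution it learns $k$-parities from $s = O(k\log n)$ examples in time $t = O\bigl(\binom{n}{k/2}\bigr)$ (for constant confidence). Since we only ask for a constant confidence parameter $\delta$, the quantities $\log(1/\delta)$ and $\delta/2$ are both $\Theta(1)$, so $s(\delta/2) = O(k\log n)$ and $t(\delta/2) = O\bigl(\binom{n}{k/2}\bigr)$, and \cref{thm:red} immediately produces a learner $\cA'$ for $\lpn$ with noise rate $\eta\in(0,\nfrac13)$ using $O(s(\delta/2)\log(1/\delta)) = O(k\log n)$ examples, which is the claimed attribute-efficient sample bound.

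It remains to read off the running time. Substituting $s(\delta/2) = O(k\log n)$, $t(\delta/2) = O\bigl(\binom{n}{k/2}\bigr)$, and $\log(1/\delta) = O(1)$ into the running-time bound of \cref{thm:red} gives
\[
\exp\!\bigl(O(H(3\eta/2)\cdot k\log n)\bigr)\cdot\Bigl(\binom{n}{k/2} + k\log n\Bigr) \;=\; n^{O(H(1.5\eta)\cdot k)}\cdot\binom{n}{k/2},
\]
after absorbing polynomial factors. To put this in the stated form $\binom{n}{k/2}^{\,1+O(H(1.5\eta))}$ one uses the estimate $\log\binom{n}{k/2} = \Theta(k\log n)$, valid in the attribute-efficient regime — concretely $\frac k2\log\frac{2n}{k}\le\log\binom{n}{k/2}\le\frac k2\log\frac{2en}{k}$, so the ratio $k\log n/\log\binom{n}{k/2}$ is $O(1)$ whenever $k$ is, say, at most $n^{1-\Omega(1)}$ — which lets one replace the base $n^{k}$ by $\binom{n}{k/2}$ at the cost of only a constant blow-up of the exponent $H(1.5\eta)$.

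Since the statement is a direct instantiation of \cref{thm:red}, there is no genuine difficulty here; the one point that needs a line of justification is exactly this base change $n^{O(H(1.5\eta)k)} = \binom{n}{k/2}^{O(H(1.5\eta))}$, which relies on $\binom{n}{k/2}$ being $n^{\Theta(k)}$ in the parameter range of interest. (If one prefers a bound with no restriction on $k$, one should simply keep the running time in the unsimplified form $n^{O(H(1.5\eta)k)}\cdot\binom{n}{k/2}$.) Everything else — selecting $\cA$, plugging into the formula, and simplifying under $\delta = \Theta(1)$ — is mechanical.
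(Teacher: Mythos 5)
Your proposal is correct and is essentially the paper's own argument: the paper likewise instantiates \cref{thm:red} with Spielman's attribute-efficient algorithm ($O(k\log n)$ samples, $O\bigl({n \choose k/2}\bigr)$ time) and declares the corollary immediate. The only addition on your side is spelling out the base change $n^{O(H(1.5\eta)k)} = {n \choose k/2}^{O(H(1.5\eta))}$ via $\log{n \choose k/2} = \Theta(k\log n)$ in the regime $k \ll n$, a detail the paper leaves implicit.
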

For comparison, the current best result of \cite{GRV11} has runtime ${n \choose k/2}^{1+4\eta^2 + o(1)}$ and sample complexity $\omega(k \log n)$. In the regime under consideration, our algorithm's runtime has a worse exponent but an asymptotically better sample complexity. 

The result of \cite{GRV11} requires ${n \choose k/2}$ time regardless of how small $\eta$ is. We show via \cref{thm:red} and \cref{thm:pac} that it is possible to break the ${n \choose k/2}$ barrier when $\eta$ is a small enough function of $n$. 
\begin{corollary} \label{cor:mainapp}
Suppose $k(n) = n/f(n)$ for some function $f: \N \to \N$ for which $f(n) \ll n/\log\log n$, and suppose $\eta(n) = o(\frac{1}{((f(n))^\alpha \log n)})$ for some $\alpha\in [1/2,1)$. Then, for constant confidence parameter, there exists an algorithm for \lpn with noise rate $\eta$ with running time $e^{-k/4.01 + o(k)} \cdot {n \choose k}^{1-\alpha} \cdot \poly(n)$ and sample complexity $O(k (f(n))^\alpha)$.
\end{corollary}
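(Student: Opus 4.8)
The plan is to compose the reduction of \cref{thm:red} with the noiseless learner of \cref{thm:pac}, choosing the trade-off parameter $t$ in \cref{thm:pac} so that $\binom{t}{k}$ drops just below $\binom{n}{k}^{1-\alpha}$, and then checking that the entropy overhead $\exp(O(H(3\eta/2)\cdot s))$ incurred by \cref{thm:red} is only $e^{o(k)}$ under the stated bound on $\eta$. Throughout we may assume $f(n)\to\infty$; otherwise $k=\Theta(n)$, no $t$ with $k\ll t\ll n$ exists, and the statement is vacuous.

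First I would instantiate \cref{thm:pac} with the given $k=n/f(n)$ and with $t$ the largest integer at most $n/(3f(n)^\alpha)$ (any constant exceeding $e$ works in place of $3$). Since $f(n)\to\infty$ and $\alpha\in[\nfrac12,1)$, we get $t/k=\Theta(f(n)^{1-\alpha})\to\infty$ and $t=o(n)$, so the requirement $\log\log n\ll k\ll t\ll n$ of \cref{thm:pac} holds, the first inequality being exactly the hypothesis $f(n)\ll n/\log\log n$. Using $\binom{t}{k}\le(et/k)^{k}$, $\binom{n}{k}\ge(n/k)^{k}$, and $et/k\le(e/3)f(n)^{1-\alpha}\le f(n)^{1-\alpha}=(n/k)^{1-\alpha}$, we get
\[
  \binom{t}{k}\;\le\;\left(\frac{et}{k}\right)^{k}\;\le\;\left(\frac{n}{k}\right)^{(1-\alpha)k}\;\le\;\binom{n}{k}^{1-\alpha},
\]
and similarly $kn/t=O(kf(n)^\alpha)$ while $\log\binom{t}{k}=O(k\log f(n))=o(kf(n)^\alpha)$. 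Hence \cref{thm:pac}, run at any constant confidence, yields an algorithm $\cA$ learning $k$-parities over the uniform distribution with $s:=O(kf(n)^\alpha)$ samples and running time $T_\cA=e^{-k/4.01}\binom{t}{k}\poly(n)\le e^{-k/4.01}\binom{n}{k}^{1-\alpha}\poly(n)$.

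Next I would feed $\cA$ into \cref{thm:red}, which --- since $\eta=o(1/(f(n)^\alpha\log n))\to0$ so that $\eta<\nfrac13$ for large $n$ --- produces an algorithm $\cA'$ for \lpn with noise rate $\eta$, at constant confidence, using $O(s)=O(kf(n)^\alpha)$ examples and running time $\exp(O(H(3\eta/2)\cdot s))\cdot(T_\cA+O(s))$. The crux is that $H(3\eta/2)\cdot s=o(k)$. Indeed $H(p)=O(p\log(1/p))$ as $p\to0$, so writing $u:=\eta f(n)^\alpha$ --- which the hypothesis forces to be $o(1/\log n)$ --- and $\log(1/\eta)=\log(1/u)+\alpha\log f(n)$, we find
\[
  H(3\eta/2)\cdot s\;=\;O\left(k u\log(1/\eta)\right)\;=\;O\left(k\left(u\log(1/u)+\alpha u\log f(n)\right)\right)\;=\;o(k),
\]
since $u\log(1/u)\to0$ as $u\to0$ and $u\log f(n)=o(1)$ because $\log f(n)<\log n$ (as $f(n)\ll n/\log\log n$) while $u=o(1/\log n)$. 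Therefore $\exp(O(H(3\eta/2)\cdot s))=e^{o(k)}$.

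Putting the pieces together, $\cA'$ runs in time $e^{o(k)}\cdot\bigl(e^{-k/4.01}\binom{n}{k}^{1-\alpha}\poly(n)+O(kf(n)^\alpha)\bigr)$. Since $kf(n)^\alpha=O(n)=\poly(n)$ and, for large $n$, $e^{-k/4.01}\binom{n}{k}^{1-\alpha}\ge1$ (using $\binom{n}{k}^{1-\alpha}\ge f(n)^{(1-\alpha)k}$ and $f(n)\to\infty$), the additive $O(n)$ term is dominated, so the running time collapses to $e^{-k/4.01+o(k)}\binom{n}{k}^{1-\alpha}\poly(n)$, the sample complexity is $O(k(f(n))^\alpha)$, and the confidence is constant, as claimed. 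I expect the only genuinely delicate point to be the entropy estimate above --- the hypothesis on $\eta$ is precisely what keeps the overhead of \cref{thm:red} sub-exponential in $k$, so that the $e^{-k/4.01}$ savings of \cref{thm:pac} survive --- together with the routine bookkeeping of which term dominates the running time.
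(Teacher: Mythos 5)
Your proposal is correct and takes essentially the same route as the paper: instantiate \cref{thm:pac} with $t$ of order $n/f(n)^\alpha$, feed the resulting learner into \cref{thm:red}, and use the hypothesis on $\eta$ precisely to make the entropy overhead $\exp(O(H(3\eta/2)\cdot s))$ only $e^{o(k)}$. Your one deviation --- taking $t\le n/(3f(n)^\alpha)$ rather than the paper's $t=\lceil n/f(n)^\alpha\rceil$ --- is actually a useful refinement rather than a detour, since the constant factor (any constant exceeding $e$) is what makes ${t\choose k}\le{n\choose k}^{1-\alpha}$ literally true; with the paper's choice of $t$ one has ${t\choose k}\approx e^{\alpha k}\cdot{n\choose k}^{1-\alpha}$, so the claimed running time only comes out after exactly the kind of adjustment you made.
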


We note that because of the results of Feldman et.~al. \cite{FGKP09}, the above results for \lpn also extend to the setting  where the example source adversarially mislabels examples instead of randomly but with the same rate $\eta$.

\subsection{Our Techniques}
We first give an algorithm to learn parities in the noiseless setting in the mistake bound model. We use the same approach as that of \cite{BGM10} (which was itself inspired by \cite{APY09}). The idea is to consider a family $\cS$ of subsets of $\zo^n$ such that the hidden $k$-sparse vector is contained inside one of the elements of $\cS$. We maintain this invariant throughout the algorithm. Now, each time an example comes, it specifies a halfspace $H$ of $\zo^n$ inside which the hidden vector is lying. So, we can update $\cS$ by taking the intersection of each of its elements with $H$. If we can ensure that the set of points covered by the elements of $\cS$ is decreasing by a constant factor at every round, then after $O(\log \sum_{S \in \cS} |S|)$ examples, the hidden vector is learned. The runtime is determined by the number of sets in $\cS$ times the cost of taking the intersection of each set with a halfspace.

One can think of the argument of Buhrman et al. \cite{BGM10} as essentially initializing $\cS$ to be the set of all ${n \choose k}$ subspaces spanned by $k$ standard basis vectors. The intersections of these subspaces with a halfspace can be computed efficiently by Gaussian elimination. Our idea is to reduce the number of sets in $\cS$. Note that we can afford to make the size of each set in $\cS$ larger by some factor $C$ because this only increases the sample complexity by an additive $\log C$. Our approach is (essentially) to take $\cS$ to be a random collection of subspaces spanned by $\alpha k$ standard basis vectors, where $\alpha > 1$ is a sufficiently large constant. We show that it is sufficient for the size of $\cS$ to be smaller than ${n/\alpha \choose k}$ by a factor that is exponential in $k$, so that the running time is also improved by the same factor. Moreover, the sample complexity increases by only a lower-order additive term.

Our second main contribution is a reduction from noiseless parity learning to noisy parity learning. The algorithm is a simple exhaustive search which guesses the location of the mislabelings, corrects those labels, applies the learner for noiseless parity and then verifies whether the output hypothesis matches the examples by drawing a few more samples. Surprisingly, this seemingly immediate algorithm allows us to devise the first algorithm which has a better running time than ${n \choose k/2}$ in the presence of a non-trivial amount of noise. The lesson seems to be that if we hope to beat ${n \choose k/2}$ for constant noise rates, we should first address the open question of Blum \cite{Blum96} of devising an attribute-efficient algorithm to learn parity without noise.

\section{Preliminaries}

Let \park be the class of all $f \in \zo^n$ of Hamming weight $k$. So, $|\text{\park}| = {n \choose k}$. With each vector $f \in \text{\park}$, we associate a parity function $f: \zo^n \to \zo$ defined by $f(a) = \sum_{i=1}^n x_i a_i \pmod 2$. 

Let $\cC$ be a concept class of Boolean functions on $n$ variables, such as \park. We discuss two models of learning in this work. One is Littlestone's {\em online mistake bound} model \cite{Littlestone89}. Here, learning proceeds in a series of rounds, where in each round, the learner is given an unlabeled boolean example $a \in \zo^n$ and must predict the value $f(a)$ of an unknown target function $f \in \cC$. Once the learner predicts the value of $f(a)$, the true value of $f(a)$ is revealed to the learner by the teacher. The {\em mistake bound} of a learning algorithm is the worst-case number of mistakes that the algorithm makes over all sequences of examples and all possible target functions $f \in \cC$. 

The second model of learning we consider is Valiant's famous {\em PAC model} \cite{Valiant84}  of learning from random examples. Here, for an unknown target function $f \in \cC$, the learner has access to a source of examples $(a, f(a))$ where $a$ is chosen independently from a distribution $\cD$ on $\zo^n$. A learning algorithm is said to PAC-learn $\cC$ with {\em sample complexity} $s$, {\em running time} $t$, {\em approximation parameter} $\eps$ and {\em confidence parameter} $\delta$ if for all distributions $\cD$ and all target functions $f \in \cC$, the algorithm draws at most $s$ samples from the example source, runs for time at most $t$ and outputs a function $f^*$ such that, with a probability at least $1-\delta$:
$$\Pr_{a \leftarrow \cD}[f(a) \neq f^*(a)]<\eps$$

Often in this paper (e.g., all of the Introduction), we consider PAC-learning over the uniform distribution, in which case $\cD$ is fixed to be uniform on $\zo^n$. Notice that for learning \park over the uniform distribution, we can take $\eps = \frac12$ because any two distinct parities differ on half of $\zo^n$.

There are standard conversion techniques  which can be used to transform any mistake-bound algorithm into a PAC learning algorithm (over arbitrary distributions):
\begin{theorem}[\cite{Angluin88, Haussler88, Littlestone89}] Any algorithm $\cA$ that learns a concept class $\cC$ in the mistake-bound model with mistake bound $m$ and running time $t$ per round can be converted into an algorithm $\cA'$ that PAC-learns $\cC$ with sample complexity $O(\frac1\eps m + \frac1\eps \log \frac1\delta)$, running time $O(\frac1\eps mt+ \frac t\eps \log\frac1\delta)$, approximation parameter $\eps$, and confidence parameter $\delta$.
\end{theorem}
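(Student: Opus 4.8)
The plan is the standard two-phase ``online-to-batch'' conversion: first simulate the mistake-bound learner $\cA$ on a stream of random labelled examples to \emph{generate} a short list of candidate hypotheses, then \emph{select} a good one by testing on fresh samples. Throughout I may assume $\cA$ is \emph{conservative}, i.e.\ it updates its internal state only on rounds where it errs; this is without loss of generality, since the conservative version sees only the subsequence of mistakes and the mistake bound $m$ holds for that subsequence as well. Consequently, while processing any stream $\cA$ passes through at most $m+1$ distinct hypotheses $h_0, h_1, \dots$, each $h_i$ staying fixed until the next mistake.

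The engine of the argument is an independence/martingale observation. Feed $\cA$ a stream of i.i.d.\ examples $a_1, a_2, \dots \leftarrow \cD$ with true labels $f(a_j)$, and let $h^{(j)}$ be the hypothesis $\cA$ uses to predict on $a_j$. Since $h^{(j)}$ depends only on $a_1,\dots,a_{j-1}$, it is independent of $a_j$, so the conditional probability that $\cA$ errs on round $j$ equals $\mathrm{err}(h^{(j)}) \defeq \Pr_{a\leftarrow\cD}[h^{(j)}(a)\neq f(a)]$. Because the mistake bound guarantees that the total number of realised mistakes over $N$ rounds is \emph{deterministically} at most $m$, a multiplicative Chernoff (Freedman) bound for the associated mistake martingale yields that, with probability at least $1-\delta/2$, the aggregate $\sum_{j=1}^N \mathrm{err}(h^{(j)}) \le 2m + O(\log\tfrac1\delta)$. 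Dividing by $N$, the duration-weighted average error of the hypotheses produced is at most $(2m + O(\log\frac1\delta))/N$. Choosing $N = N_1 = O\!\big(\tfrac1\eps(m + \log\tfrac1\delta)\big)$ forces this average below $\eps/4$, so by Markov a constant fraction of rounds use a hypothesis of error at most $\eps/2$; in particular at least one of the $\le m+1$ generated candidates has error $\le \eps/2$.

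For the selection phase I would draw $N_2 = O\!\big(\tfrac1\eps\log\tfrac{m}\delta\big)$ fresh examples, estimate each candidate's empirical error on them, and output the candidate with the smallest estimate. A Chernoff bound makes each estimate accurate to within $\eps/4$ with probability $1 - \tfrac{\delta}{2(m+1)}$, and a union bound over the $\le m+1$ candidates (succeeding with probability $\ge 1-\delta/2$) guarantees that the returned hypothesis has true error $\le \eps$ while the good candidate is retained. Combined with the generation phase this succeeds with probability $\ge 1-\delta$. The total sample complexity is $N_1 + N_2 = O(\tfrac1\eps(m+\log\tfrac m\delta)) = O(\tfrac1\eps m + \tfrac1\eps\log\tfrac1\delta)$, using $\log m = O(m)$, and the running time is dominated by simulating $\cA$ for $N_1$ rounds at cost $t$ each, i.e.\ $O(\tfrac t\eps m + \tfrac t\eps\log\tfrac1\delta)$, matching the claim (the cost of the selection phase is of lower order).

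The step I expect to be the crux is establishing the \emph{sharp} sample bound $O(\tfrac1\eps(m+\log\tfrac1\delta))$ rather than the weaker $O(\tfrac m\eps\log\tfrac1\delta)$ that a naive ``run until some hypothesis survives a long stretch, then amplify confidence by repetition'' argument produces. The difficulty is that any single-stretch survival test spends $\Omega(\tfrac1\eps\log\tfrac{m}\delta)$ examples \emph{per} candidate, manufacturing the spurious product of $m$ and $\log\tfrac1\delta$. Avoiding it requires exploiting the mistake bound \emph{amortised across the whole run} --- the concentration statement that the realised mistake count, deterministically $\le m$, pins down the sum of per-round errors --- so that a single pass of $O(\tfrac1\eps(m+\log\tfrac1\delta))$ examples already certifies that a good hypothesis was generated, leaving only a cheap selection step. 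Verifying the Freedman-type inequality in the form needed (a lower-tail bound for the mistake martingale given a deterministic upper bound on its sum) is the one genuinely technical ingredient.
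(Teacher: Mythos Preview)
The paper does not prove this theorem at all: it is quoted as a known result with citations to \cite{Angluin88, Haussler88, Littlestone89}, and no argument is given. So there is nothing in the paper to compare your proof against.

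That said, your reconstruction is essentially the standard online-to-batch conversion and is correct in its main lines. The conservative-learner reduction, the observation that $h^{(j)}$ is independent of $a_j$, and the use of a Freedman-type bound to control $\sum_j \mathrm{err}(h^{(j)})$ by the realised mistake count $\leq m$ plus an additive $O(\log(1/\delta))$ are all right; this is exactly the amortisation that yields the additive form $O(\tfrac1\eps m + \tfrac1\eps\log\tfrac1\delta)$ rather than the cruder product bound. Your identification of the Freedman step as the crux is accurate, and the stopping-time version of Freedman you would need does go through with the constants you state.

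One small point: your claim that the selection phase is of lower order in running time is not quite right as written. Evaluating each of $m{+}1$ candidates on $N_2 = O(\tfrac1\eps\log\tfrac{m}{\delta})$ fresh samples costs $O(\tfrac{m}{\eps}\log\tfrac{m}{\delta})$ hypothesis evaluations, which has an extra $m\log(1/\delta)$ cross term compared with the target $O(\tfrac{t}{\eps}(m+\log\tfrac1\delta))$. This is easily fixed --- for instance by outputting the longest-surviving hypothesis from Phase~1 and arguing directly, or by a sequential elimination rather than testing all candidates in parallel --- but as stated it slightly overshoots the claimed time bound.
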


The \lpn problem with noise rate $\eta$, introduced in \cref{sec:intro}, corresponds to the problem of PAC-learning \park under the uniform distribution, when the example source can mislabel examples with a rate $\eta \in (0,\nfrac12)$.  More generally, one can study the {\em \lpn problem over $\cD$}, an arbitrary distribution. \cite{GRV11} show the following for this problem:
\begin{theorem}[Theorem 5 of \cite{GRV11}]
For any $\eps, \delta, \eta \in (0,\nfrac12)$, and distribution $\cD$ over $\zo^n$, the \lpn problem over $\cD$ with noise rate $\eta$  can be solved using $\frac{k\log(n/\delta)\omega(1)}{\eps^2(1-2\eta)^2}$ samples in time $ \frac{1}{\eps^2 (1-2\eta)^2} \cdot {n \choose k/2}^{1+(\frac \eta {\eps+\eta -2\eps\eta})^2  + o(1)}$, where $\eps$ and $\delta$ are the approximation and confidence parameters respectively. 
\end{theorem}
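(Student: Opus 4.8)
The plan is to reduce noisy $k$-parity learning to the problem of finding a single heavy low-degree Fourier coefficient of the label function, and then to locate that coefficient by a meet-in-the-middle search over $(k/2)$-subsets whose cost carries the stated noise-dependent exponent. First I would recode each (possibly mislabeled) example $(a,\ell)$ by $y=(-1)^\ell \in \sbits$ and write $\chi_T(a)=\prod_{i\in T}(-1)^{a_i}$ for a subset $T$. Over the uniform distribution, $\E\big[y\chi_T\big]$ equals the advantage $\gamma=1-2\eta$ exactly when $T=S$, the target support with $|S|=k$, and $0$ otherwise, so the target is the unique degree-$k$ character correlated with the labels and the task is to estimate these correlations from random examples. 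Over a general $\cD$ this clean separation fails, but since we need only an $\eps$-approximation it suffices to reject a wrong $k$-parity $\chi_T$ precisely when $\Pr_{a\sim\cD}[\chi_T\neq\chi_S]\ge\eps$. Such a $T$ has $\cD$-correlation at most $(1-2\eta)(1-2\eps)=1-2(\eps+\eta-2\eps\eta)$ with the labels, so the ``noise floor'' to beat is the combined rate $\gamma'=\eps+\eta-2\eps\eta$ satisfying the identity $(1-2\eps)(1-2\eta)=1-2\gamma'$; I expect $\eps$ to enter the analysis exactly as this second, independent noise source acting on the test.

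Next I would set up the meet-in-the-middle. Factor each candidate $k$-set as a disjoint union $T=A\cup B$ of two $(k/2)$-sets. After drawing $m$ examples, form for each of the $N=\binom{n}{k/2}$ sets $A$ the sample-vector $u_A=(y^{(j)}\chi_A(a^{(j)}))_{j\le m}\in\sbits^m$, and likewise $v_B=(\chi_B(a^{(j)}))_{j\le m}$; then $\tfrac1m\langle u_A,v_B\rangle$ is the empirical correlation of the labels with $\chi_{A\triangle B}$, which is $\approx\gamma$ exactly when $A,B$ are disjoint with $A\cup B=S$, and is separated from the floor $1-2\gamma'$ by a gap of order $\eps\gamma$ for every rejectable pair. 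A Hoeffding bound over the $\approx N^2$ pairs then forces the target pair to be the unique one passing an appropriate threshold, provided $m=\Theta\big(\tfrac{\log N}{(\eps\gamma)^2}\big)=\Theta\big(\tfrac{k\log(n/\delta)}{\eps^2(1-2\eta)^2}\big)$, which is the claimed sample complexity; the $\omega(1)$ and $\log(1/\delta)$ factors come from boosting confidence and from verifying each candidate hypothesis on fresh examples.

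The crux, and the step I expect to be the main obstacle, is the search itself: scanning all $N^2\approx\binom{n}{k}$ pairs only recovers the brute-force bound, so I must find the one heavy pair in time $N^{1+\rho}$ with $\rho=\big(\tfrac{\eta}{\eps+\eta-2\eps\eta}\big)^2+o(1)$. I would do this by a correlation-sensitive bucketing: repeatedly hash each $u_A$ and each $v_B$ by their restriction to a random block of $\ell$ sample-coordinates and inspect only colliding pairs, exploiting that the target pair agrees on each coordinate with probability $1-\eta=\tfrac{1+\gamma}{2}$ (it disagrees exactly on the noisy samples), while a generic pair agrees with probability $\tfrac12$. Balancing the number of repetitions, of order (collision probability of the true pair)$^{-1}$, against the expected number $N^2 2^{-\ell}$ of spurious collisions determines $\rho$ and the granularity $\ell$. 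The difficulty is that a naive exact-match hash yields only a weaker, linear-in-$\eta$ exponent; obtaining the tight $(\eta/\gamma')^2$ requires a sharper, moment-based (birthday-style) collision analysis in which the approximation parameter is folded into the effective floor $\gamma'=\eps+\eta-2\eps\eta$. Everything else is routine and contributes only the stated polynomial and $\log(1/\delta)$ factors: estimating $\cD$-correlations directly to handle a general distribution, verifying each surviving candidate on $O\big(\tfrac1{\eps^2}\log\tfrac1\delta\big)$ fresh examples, and amplifying the overall success probability to $1-\delta$.
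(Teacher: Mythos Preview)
The paper does not prove this theorem. It is stated in the Preliminaries section as ``Theorem~5 of \cite{GRV11}'' and is quoted verbatim as background, with no accompanying proof or proof sketch; the present paper only \emph{uses} it for comparison with its own results. So there is nothing in the paper to compare your proposal against.

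That said, your outline is broadly faithful to what \cite{GRV11} actually does: the reduction to finding a single heavy degree-$k$ character, the split into two halves of size $k/2$, the formation of $\pm 1$ vectors $u_A,v_B$ from $m$ samples so that the target pair has correlation $1-2\eta$ while all other pairs have correlation near $0$, the sample bound $m=\Theta\big(k\log(n/\delta)/(\eps^2(1-2\eta)^2)\big)$ from Hoeffding plus a union bound, and the absorption of the approximation parameter $\eps$ into an effective noise level via $(1-2\eps)(1-2\eta)=1-2\gamma'$. The one place your proposal is vague is exactly where the exponent $(\eta/\gamma')^2$ comes from: a plain single-coordinate LSH/bucketing gives an exponent linear in $\eta$, not quadratic, and you correctly flag this as the obstacle but do not resolve it. In \cite{GRV11} the quadratic exponent is obtained not by ad hoc bucketing but by invoking a closest-pair/light-bulb algorithm whose analysis yields precisely $N^{1+(2\eta)^2+o(1)}$ in the uniform case and the displayed expression in general; your ``moment-based (birthday-style) collision analysis'' gesture is in the right direction but is not yet a proof.
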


\section{In the absence of noise}\label{sec:nless}
We state the main result of this section.
\begin{theorem} \label{thm:imBGM}
Let $k,t : \mathbb{N} \rightarrow \mathbb{N}$ be two functions such that $\log\log n \ll k(n) \ll t(n) \ll n$. Then for every $n \in \mathbb{N}$, there is an algorithm that learns \park in the mistake-bound model, with mistake bound at most $(1+o(1))\frac{kn}{t} +  \log {t \choose k}$ and running time per round $ e^{-k/4.01} \cdot {t \choose k} \cdot \tilde{O}\left(\left({kn}/{t} \right)^2\right)$.
\end{theorem}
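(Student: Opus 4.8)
The plan is to adapt the $\mathcal S$-covering framework of \cite{BGM10} described in the techniques section, replacing their deterministic family of ${n \choose k}$ coordinate subspaces with a randomly chosen family of ${t \choose k}$-dimensional coordinate subspaces — wait, more precisely, a random family of subspaces each spanned by some $\beta k$ standard basis vectors for a suitable constant $\beta$, chosen so that (i) the hidden weight-$k$ vector $x$ lies in at least one member of $\mathcal S$ with certainty (after a preprocessing/partitioning step), and (ii) $|\mathcal S|$ can be taken as small as $e^{-k/4.01}{t \choose k}$. So the first step is to set up the coordinate-partition reduction exactly as in \cite{BGM10}: partition the $n$ coordinates into $t$ blocks of size roughly $n/t$; since $|x| = k$, the support of $x$ touches at most $k$ blocks, and within each touched block we can pin down the relevant coordinate using $O(n/t)$ examples (this is where the $(1+o(1))kn/t$ term in the mistake bound comes from). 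After this reduction we are left with learning a $k$-parity on $t$ ``virtual'' coordinates.

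The core new ingredient is the combinatorial/probabilistic claim that a random collection of $N$ subsets of $[t]$, each of size $\beta k$, covers \emph{every} fixed $k$-subset of $[t]$ — equivalently, some member contains any given support — with good probability already when $N \ll {t \choose k}$ by an $\exp(k)$ factor. Concretely, a uniformly random $\beta k$-subset of $[t]$ contains a fixed $k$-subset with probability ${t-k \choose \beta k - k}/{t \choose \beta k} = {\beta k \choose k}/{t \choose k}$; so in expectation $N \cdot {\beta k \choose k}/{t \choose k}$ of the chosen sets contain it, and a union bound over all ${t \choose k}$ possible supports shows $N = \poly(t)\cdot {t \choose k}/{\beta k \choose k}$ suffices to cover all of them with high probability. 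Since ${\beta k \choose k} = e^{(\beta\ln\beta - (\beta-1)\ln(\beta-1))k + o(k)} = e^{\Theta(k)}$, choosing $\beta$ large enough makes ${\beta k \choose k} \geq e^{k/4.01}$, giving exactly the claimed $|\mathcal S| \leq e^{-k/4.01}{t\choose k}$; one checks $\beta = $ (some explicit constant) works, and the ``$4.01 \to$ any constant $>4$'' footnote corresponds to pushing $\beta \to \infty$ where $\ln{\beta k \choose k}/k \to \ln 4$... actually $\binom{\beta k}{k}^{1/k} \to$ a constant that grows, so we want the crossover where it first exceeds $e^{1/4}$, i.e. around ${\beta k \choose k}^{1/k} > e^{1/4}$, fine — I would just state the asymptotics of $\binom{\beta k}{k}$ and pick the constant. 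This derandomization-free argument needs $k \gg \log\log n$ so that the failure probability $\binom tk \cdot \exp(-\Omega(k))$ is $o(1)$ and so that $\binom{\beta k}{k}$ is a genuine $\exp(k)$ quantity; that is exactly the hypothesis $\log\log n \ll k$.

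Given the family $\mathcal S$, the algorithm maintains, for each $S \in \mathcal S$, the affine subspace of weight-$\le k$... rather, the coset structure obtained by intersecting $S$ (as a $\beta k$-dimensional coordinate subspace of $\zo^t$) with the halfspaces $\{v : \langle a_i, v\rangle = b_i\}$ seen so far; each such intersection is an affine subspace maintainable by Gaussian elimination in time $\poly(\beta k, t) = \tilde O((kn/t)^2)$ per set per round — wait, the dimension is at most $\beta k$ and the ambient description has $t \le n$ coordinates, and reading/writing the examples costs $O(kn/t)$ bits, so the per-round, per-set cost is $\tilde O((kn/t)^2)$; multiplying by $|\mathcal S| \le e^{-k/4.01}{t\choose k}$ gives the stated running time per round. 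For the prediction rule and mistake bound: on each example $a$, predict the bit $b$ such that the total number of weight-$k$ points (equivalently, a weighted count across all affine pieces in all $S\in\mathcal S$) consistent with $\langle a,\cdot\rangle = b$ is at least half; a mistake halves the total count, which starts at $|\mathcal S|\cdot 2^{\beta k} \le 2^{O(k)}{t\choose k}$, so the number of mistakes in this phase is $\log(|\mathcal S|2^{\beta k}) \le \log{t\choose k} + O(k) = \log{t\choose k} + o(kn/t)$ — hmm, I need $O(k) = o(kn/t)$, which holds since $t \ll n$, so it folds into the $(1+o(1))$. Combining the two phases yields mistake bound $(1+o(1))kn/t + \log{t\choose k}$.

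The main obstacle is the probabilistic covering claim together with making the constant bookkeeping tight enough to land on $e^{-k/4.01}$ rather than merely $e^{-ck}$ for some unspecified $c$: one must compute $\lim_{\beta\to\infty}\frac1k\ln\binom{\beta k}{k}$ (it is $\ln 4$, attained in the limit, since $\binom{\beta k}{k}\sim (e\beta)^k/\sqrt{2\pi k}$ for large $\beta$ — no wait that gives $\ln(e\beta)\to\infty$; the relevant bound is different). Let me restate: $\binom{\beta k}{k} = \exp\big(k[\beta\ln\beta-(\beta-1)\ln(\beta-1)] + O(\log k)\big)$, and $\beta\ln\beta-(\beta-1)\ln(\beta-1)\to\infty$ as $\beta\to\infty$, so in fact \emph{any} target exponent, in particular $1/4.01$, is achievable by a large enough constant $\beta$ — and the footnote's ``any constant $>4$'' must instead refer to a more careful version of the covering bound where the extra polynomial and slack factors are absorbed, needing the exponent of ${t\choose k}/\binom{\beta k}{k}$ to stay below $1$; I would track this by writing ${t\choose k}/\binom{\beta k}{k} \le e^{-\gamma(\beta)k}{t\choose k}$ with $\gamma(\beta)\uparrow\infty$ and checking $\gamma(\beta) > 1/4$ suffices once the $\poly(t)$ union-bound factor is shown to be $e^{o(k)}$ (again using $k\gg\log\log n$, since $\poly(t) \le \poly(n) = e^{o(k)}$ fails unless... hmm, $\poly(n) = e^{o(k)}$ requires $\log n = o(k)$, which is stronger than $\log\log n \ll k$). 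I would resolve this tension by using a sharper covering argument — e.g., a random greedy / alteration argument, or the Lovász Local Lemma, or simply noting that the union-bound factor can be reduced to $O(\log\binom tk) = O(k\log(n/k))$ and absorbed into the ambient $\poly(n)$ runtime rather than into $|\mathcal S|$ itself — so that $|\mathcal S|$ genuinely equals $e^{-k/4.01}{t\choose k}$ and the $\poly(n)$ in the runtime statement soaks up everything else. That reconciliation is the delicate part; the rest is routine adaptation of \cite{BGM10}.
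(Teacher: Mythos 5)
There is a genuine gap, and it sits exactly at the point your write-up is least sure about. You partition $[n]$ into $t$ blocks of size roughly $n/t$ and take random unions of $\beta k$ blocks. With that parameterization you cannot get both the claimed mistake bound and the claimed speedup: if the candidate sets are the subspaces spanned by the chosen blocks (the only thing the halving potential can use), each has dimension about $\beta k\cdot n/t$, so the potential starts at $|\mathcal S|\,2^{\beta kn/t}$ and the leading mistake term becomes $\beta\, kn/t$ --- a constant factor worse than $(1+o(1))kn/t$, and the $e^{-\Omega(k)}$ saving in $|\mathcal S|$ cannot repair this since $k\ll kn/t$. Your way around this is a ``phase 1'' that pins down the relevant coordinate inside each touched block using $O(n/t)$ examples and then works over $t$ virtual coordinates, but that is not a legitimate step in the mistake-bound model: the learner does not choose the examples, does not know which blocks are touched, a block may contain several support coordinates, and no halving/potential argument is given for such a phase. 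The paper's fix is the one idea your proposal is missing: refine the partition to $T=\alpha t$ blocks of size $\lceil n/(\alpha t)\rceil$ and take random subsets of $\alpha k$ blocks (\cref{clm:simp}). Every candidate subspace then still has dimension at most $\alpha k\lceil n/(\alpha t)\rceil=(1+o(1))kn/t$, so the leading term is untouched, while the number of subspaces needed to cover all $k$-subsets of $[T]$ is $\tilde{O}\bigl({T \choose \alpha k}\big/{T-k \choose \alpha k-k}\bigr)$, which \cref{lem:calc} bounds by $e^{-k/4.01}{t \choose k}$ for $k\ll t$ and $\alpha$ a large constant.

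Two further points. Your puzzlement about the constant $4.01$ comes from analyzing the wrong ratio: with $T=\alpha t$ the relevant quantity is ${T \choose \alpha k}\big/{T-k \choose \alpha k - k}$ measured against ${t \choose k}$, not ${t \choose k}\big/{\beta k \choose k}$; the paper's term-by-term estimate of that product gives an exponent approaching $1/4$ as $\alpha\to\infty$, which is exactly where ``any constant more than $4$'' comes from, and the union-bound overhead $O(\log{T \choose k})=O(k\log(t/k))$ is harmless because it is $e^{o(k)}$ precisely under the hypothesis $k\gg\log\log n$ (this is also where that hypothesis is used in the mistake-bound bookkeeping). Separately, your prediction rule should compare the total sizes $\sum_i|N_i(a,0)|$ versus $\sum_i|N_i(a,1)|$ of the maintained affine subspaces, read off from their dimensions via Gaussian elimination, rather than counting weight-$k$ points inside them, which you have no efficient way to do; the argument only needs the invariant that the hidden vector lies in some $N_i$ and that the total size halves on every mistake.
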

Using \cref{thm:red}, we directly obtain \cref{thm:pac}. In fact, since \cref{thm:red} produces a PAC-learner over any distribution, a statement of the form of \cref{thm:pac} holds for examples obtained from any distribution.

For comparison, we quote the relevant result of \cite{BGM10} in the mistake-bound model.
\begin{theorem}[Theorem 2.1 of \cite{BGM10}]\label{thm:bgm}
Let $k,t : \mathbb{N} \rightarrow \mathbb{N}$ be two functions such that $ k(n) \leq t(n) \leq n$. Then for every $n \in \mathbb{N}$, there is a deterministic algorithm that learns \park in the mistake-bound model, with mistake bound at most $k\lceil\frac{n}{t}\rceil +  \log {t \choose k}$ and running time per round ${t \choose k} \cdot O(\left({kn}/{t} \right)^2)$.
\end{theorem}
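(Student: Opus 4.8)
The plan is to run the deterministic ``volume-halving'' strategy sketched in \cref{sec:intro}: maintain a family $\cS$ of affine subspaces of $\zo^n$ with the invariant that the hidden vector $x$ lies in $\bigcup_{S\in\cS}S$, and on each example $a$ predict the label whose candidate volume is larger. Concretely, write $V=\sum_{S\in\cS}\card S$ for the current total volume and split $V=V_0+V_1$, where $V_b$ counts, over all $S$, the points $y\in S$ with $\langle a,y\rangle\equiv b \pmod 2$; the learner predicts the label $b$ maximizing $V_b$. A point in an affine subspace $S$ contributes its full mass to a single $V_b$ when $\langle a,\cdot\rangle$ is constant on $S$, and splits evenly otherwise, so $\min(V_0,V_1)\le V/2$. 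Hence on a mistake the true label $b^\ast=\langle a,x\rangle$ is the minority value, and replacing each $S$ by $S\cap\{y:\langle a,y\rangle\equiv b^\ast\}$ keeps $x$ in the union while at least halving $V$. Since $x$ always survives, $V\ge 1$ throughout, so the number of mistakes is at most $\log_2 V_{\mathrm{init}}$.

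The crux is choosing $\cS$ so that $V_{\mathrm{init}}\le\binom{t}{k}\,2^{k\lceil n/t\rceil}$, which makes $\log_2 V_{\mathrm{init}}\le k\lceil n/t\rceil+\log\binom{t}{k}$, exactly the claimed mistake bound. I would partition $[n]$ into $m=\lceil n/t\rceil$ consecutive blocks of size at most $t$ and ``fold'' them onto a common column set $[t]$, identifying the $i$-th coordinate of each block with column $i$. For each $k$-subset $C\in\binom{[t]}{k}$ let $S_C=\mathrm{span}\{e_p : p \text{ has column in } C\}$, a coordinate subspace of dimension at most $km$, and take $\cS=\{S_C\}_{C}$, so $\card\cS=\binom{t}{k}$ and $V_{\mathrm{init}}\le\binom{t}{k}2^{km}$. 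The covering invariant holds at the start because $\supp(x)$ has $k$ elements, hence occupies at most $k$ distinct columns, so any $C$ containing those columns gives $x\in S_C$.

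For the running time I would represent each current set as an affine subspace $\{v+Bz : z\in\zo^d\}$ supported on the $\le km$ coordinates touched by $S_C$. Testing whether $\langle a,\cdot\rangle$ is constant and computing the contribution to $V_0,V_1$ reduces to evaluating $B^{\top}a_C$, where $a_C$ is the restriction of $a$ to those coordinates, and intersecting with the halfspace reduces to one elimination step that either deletes the set, leaves it unchanged, or lowers $d$ by one; each of these costs $O((km)^2)$. Since $km\le k\lceil n/t\rceil=O(kn/t)$, this is $O((kn/t)^2)$ per set and $\binom{t}{k}\cdot O((kn/t)^2)$ per round, matching the bound. Everything is deterministic: both $\cS$ and the majority rule use no randomness.

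The main obstacle is the folding construction rather than the halving analysis, which is standard. The point is that a single fixed column pattern $C$ of size $k$ must simultaneously absorb the support of $x$ across all $m$ blocks; this is what lets $\card\cS$ drop from the naive $\binom{n}{k}$ to $\binom{t}{k}$ at the cost of raising each subspace's dimension from $k$ to $km$, thereby trading the multiplicative running-time factor against an additive $k(\lceil n/t\rceil-1)$ in the mistake bound. I would also take care that the halving is in the aggregate volume $V$ and not per set, since a set on which $\langle a,\cdot\rangle$ is constant contributes entirely to one side, and handle the boundary case $t\nmid n$, where the final block is short and some $S_C$ have dimension below $km$; this only decreases $V_{\mathrm{init}}$ and is harmless.
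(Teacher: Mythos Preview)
Your proposal is correct and is essentially the same argument as the one the paper attributes to \cite{BGM10} and reproduces in \cref{sec:nless}: partition $[n]$ into $t$ groups of size at most $\lceil n/t\rceil$, take the $\binom{t}{k}$ coordinate subspaces spanned by unions of $k$ groups, and run the standard volume-halving (majority-prediction) analysis with Gaussian elimination for the per-round updates. Your ``folding onto columns'' is just a specific way of naming that partition---the columns of your $\lceil n/t\rceil \times t$ array are exactly the $t$ parts $C_1,\dots,C_t$ in the paper's description---so the covering, mistake-bound, and running-time analyses coincide.
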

Note that their mistake bound is better by a lower-order term which we do not see how to avoid in our setup. This slack is not enough though to recover \cref{thm:imBGM} from \cref{thm:bgm}: dividing $t$ by $C$ roughly multiplies the sample complexity by $C$ and divides the running time by $C^k$ in \cite{BGM10}'s algorithm, whereas in our algorithm, dividing $t$ by $C$ roughly multiplies the sample complexity by $C$ and divides the running time by $(1.28C)^k$.

\ignore{
Basically, the above result says that for any $\epsilon >0$, we get an exponential in $k$ improvement in running time over that of the BGM algorithm. Let us look at few interesting values of $\epsilon$ for the theorem.
\begin{itemize}
\item[$\bullet$] When $\epsilon = 0$, then we do not get any improvement. The running time per round and the mistake bound that we get for this case are same as those of the BGM algorithm.
\item[$\bullet$] When $\epsilon = 1$, then we obtain an improvement of $O(2^{-k/3})$ in the running time per round over that of the BGM algorithm. Note that, the mistake bound is now at most twice that of the BGM algorithm.
\item[$\bullet$] For $0 < \epsilon <1$, there is still an exponential in $k$ improvement in the running time which depends on $\epsilon$ since as $\epsilon \rightarrow 0$, the improvement term $\Big(\frac{1+ \epsilon}{1+ 3 \epsilon} \Big) \rightarrow 1$.
\end{itemize}

From the results of ANG88, LIT89, our main result imply an improvement in the running time of $PAR(k)$ in the PAC model. Further, the number of samples required for this algorithm in PAC model are not much large compared to that of BGM algorithm in PAC model.

In the next section, we prove the \cref{thm:imBGM} by describing an algorithm.
}
\subsection{The Algorithm}
Let $f \in \zo^n$ be the hidden vector of sparsity $k$ that the learning algorithm is trying to learn. Let $e=\lbrace e_1, e_2, \cdots, e_n \rbrace$ be the set of standard basis of the vector space $\lbrace 0,1 \rbrace^n$. 

Let $\alpha$ be a large constant we set later, and let $T = \alpha t$.  Note that $T \ll n$. We define an arbitrary partition $\pi = C_1, C_2, \cdots, C_T$ on the set $e$ into $T$ parts, each of size at most $\lceil n/T\rceil$. Next, let $S_1, \dots, S_m \subset [T]$ be $m$ random subsets of $[T]$, each of size $\alpha k$. We choose $m$ to ensure the following:
\begin{claim}\label{clm:simp}
If $m = \tilde{O}\left(\frac{{T \choose \alpha k}}{{T-k \choose \alpha k - k}}\right)$, then with nonzero probability, for every set $A \subset [T]$ of size $k$, $A \subset S_i$ for some $i \in [m]$. 
\end{claim}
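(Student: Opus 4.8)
The plan is a standard probabilistic (union-bound) argument: fix $m$ to be the claimed value times a suitable logarithmic factor, choose $S_1,\dots,S_m$ independently and uniformly at random among the $\binom{T}{\alpha k}$ subsets of $[T]$ of size $\alpha k$, and show that the probability that \emph{some} $k$-subset $A\subset[T]$ fails to be contained in any $S_i$ is strictly less than $1$; then a random choice works with nonzero probability. First I would compute, for a fixed $A$ with $|A|=k$, the probability that a single random $S_i$ contains $A$: this is exactly $\binom{T-k}{\alpha k - k}/\binom{T}{\alpha k}$, since after forcing the $k$ elements of $A$ into $S_i$ we must choose the remaining $\alpha k - k$ elements from the other $T-k$. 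Call this quantity $p$. By independence, $\Pr[\,A\not\subset S_i \text{ for all } i\,] = (1-p)^m \le e^{-pm}$.

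Next I would take a union bound over all $\binom{T}{k}$ choices of $A$, giving failure probability at most $\binom{T}{k}e^{-pm}$. For this to be less than $1$ it suffices to have $pm > \ln\binom{T}{k}$, i.e. $m > \frac{1}{p}\ln\binom{T}{k} = \frac{\binom{T}{\alpha k}}{\binom{T-k}{\alpha k-k}}\ln\binom{T}{k}$. Since $\ln\binom{T}{k} = O(k\log T) = \mathrm{poly}\log$ in the relevant parameters, this is absorbed into the $\tilde{O}(\cdot)$ in the statement, so $m = \tilde{O}\!\big(\binom{T}{\alpha k}/\binom{T-k}{\alpha k-k}\big)$ suffices, as claimed.

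There is essentially no hard step here; the only thing to be slightly careful about is confirming that the $\tilde O$ notation is intended to hide exactly the $\log\binom{T}{k}$ (equivalently, $\mathrm{poly}(k,\log n)$) factor, and that the ratio $\binom{T}{\alpha k}/\binom{T-k}{\alpha k-k}$ is the right "per-set success probability" reciprocal — which one checks by the elementary counting identity above. If one wanted an explicit handle on the ratio (useful later for the running-time bound), one can rewrite $\binom{T-k}{\alpha k-k}/\binom{T}{\alpha k} = \prod_{j=0}^{k-1}\frac{(\alpha k - j)}{(T-j)} \cdot \big(\text{correction}\big)$, or more cleanly bound it below using $\binom{T-k}{\alpha k - k}/\binom{T}{\alpha k} \ge \big(\frac{\alpha k - k}{T}\big)^{k}$-type estimates; but for the claim as stated this is not needed, only the clean identity for $p$ and the union bound.
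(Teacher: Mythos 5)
Your proposal is correct and follows essentially the same route as the paper: compute $\Pr[A \subset S_i] = \binom{T-k}{\alpha k-k}/\binom{T}{\alpha k}$, bound the failure probability for a fixed $A$ by $e^{-pm}$, and union bound over the at most $\binom{T}{k}$ choices of $A$, with the $\log\binom{T}{k}$ factor absorbed into the $\tilde{O}$ (the paper takes $m = 2\,\frac{\binom{T}{\alpha k}}{\binom{T-k}{\alpha k-k}}\log\binom{T}{k}$ explicitly). No gap.
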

\begin{proof}
This follows from the simple observation that for any fixed $i \in [m]$, $\Pr[A \subset S_i] = {T-k \choose \alpha k - k}\big/{T \choose \alpha k}$, and so, 
$$\Pr[\exists i \in [m], A \not \subset S_i] = \left(1 - {T-k \choose \alpha k - k}\bigg/{T \choose \alpha k}\right)^m
 \leq e^{- m {T-k \choose \alpha k - k}/{T \choose \alpha k}}$$
Choosing $m = 2\frac{{T \choose \alpha k}}{{T-k \choose \alpha k - k}} \log {T \choose k}$ and applying the union bound finishes the proof.
\end{proof}
 
We fix some choice of $S_1, \dots, S_m \subset [T]$ that satisfies the conclusion of Claim \ref{clm:simp} for what follows. In fact, the rest is exactly \cite{BGM10}'s algorithm, which we reproduce for completeness. 

For every $i \in [m]$, let $M_i \subset \zo^n$ be the span of $\bigcup_{j \in S_i} C_j$. Note that $\left|\bigcup_{j \in S_i} C_j\right| \leq \alpha k \lceil n/T \rceil \leq \alpha k \cdot \left(\frac n T + 1 \right)\leq \frac{kn}{t} + \alpha k = (1+o(1)) kn/t$, as $t \ll n$ and $\alpha$ is a constant. So, $M_i$ is a linear subspace containing at most $2^{(1+o(1))kn/t}$ points.

Note that every $f \in \zo^n$ with $|f|=k$ is contained in some $M_i$. This is simply because every set of $k$ standard basis vectors is contained in at most $k$ of the $T$ parts in the partition $\pi$, and by Claim \ref{clm:simp}, every subset of $[T]$ of size $k$ is contained in some $S_i$. 

Initially, the unknown target vector $f$ can be in any of the $M_i$'s. Consider what happens when the learner sees an example $a \in \zo^n$ and a label $y \in \zo$. For $i \in [m]$, let $M_i(a,y) = \{f \in M_i : f(a) = y\}$.  $M_i(a,y)$ may be of size $0$, $|M_i|$ or $|M_i|/2$. Note that the size of $M_i(a,y)$ can be efficiently found using Gaussian elimination.

We are now ready to describe the algorithm:
\begin{itemize}
\item \textbf{Initialization:}
The learning algorithm begins with a set of affine spaces $N_i, i \in [m]$ represented by a system of linear equations. Initialize the affine spaces $N_i = M_i$ for all $i \in [m]$.

\item \textbf{On receiving an example $a \in \zo^n$:}
Predict its label $\hat{y} \in \zo$ such that $\sum_{i \in [m]} |N_i(a,\hat{y})| \geq \sum_{i \in [m]} |N_i(a,1-\hat{y})|$. 

\item
\textbf{On receiving the answer from the teacher $y = f(a)$:}
Update $N_i$ to $N_i(a,y)$ for each $i \in [m]$. 
\end{itemize}

\subsection{Analysis}
Before we analyze the algorithm, we first establish a combinatorial claim that is the crux of our improvement:
\begin{lemma}\label{lem:calc}
If $\alpha$ is a large enough constant,
$$\frac{{T \choose \alpha k}}{{T-k \choose \alpha k - k}} \leq e^{-k/4.01} \cdot {t \choose k}$$
\end{lemma}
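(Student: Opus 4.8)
The plan is to collapse the left-hand side into a cleaner expression using the \emph{subsets-of-a-subset} identity $\binom{T}{\alpha k}\binom{\alpha k}{k}=\binom{T}{k}\binom{T-k}{\alpha k-k}$ (both sides count pairs $V\subseteq U\subseteq[T]$ with $|V|=k$ and $|U|=\alpha k$), which yields
$$\frac{\binom{T}{\alpha k}}{\binom{T-k}{\alpha k-k}}=\frac{\binom{T}{k}}{\binom{\alpha k}{k}}.$$
Since $T=\alpha t$, the claimed bound is then equivalent to $\binom{\alpha t}{k}\le e^{-k/4.01}\binom{\alpha k}{k}\binom{t}{k}$. So it suffices to show that $\binom{\alpha t}{k}/\binom{t}{k}$ exceeds $\alpha^k$ by only a subexponential-in-$k$ factor, while $\binom{\alpha k}{k}$ exceeds $\alpha^k$ by a genuinely exponential-in-$k$ factor.

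For the first of these I would expand the ratio as a product $\binom{\alpha t}{k}/\binom{t}{k}=\prod_{i=0}^{k-1}\frac{\alpha t-i}{t-i}$ and bound each factor by $\frac{\alpha t-i}{t-i}\le\frac{\alpha t}{t-k}=\frac{\alpha}{1-k/t}$, giving $\binom{\alpha t}{k}/\binom{t}{k}\le\alpha^k(1-k/t)^{-k}$. The hypothesis $k\ll t$ of \cref{thm:imBGM} makes $(1-k/t)^{-k}=\exp\big(k\ln\tfrac1{1-k/t}\big)=\exp(O(k^2/t))=e^{o(k)}$, so $\binom{\alpha t}{k}/\binom{t}{k}\le\alpha^k e^{o(k)}$.

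For the second I would use the standard entropy lower bound $\binom{\alpha k}{k}\ge\frac{1}{\alpha k+1}\,2^{\alpha k\,H(1/\alpha)}$, where $H(p)=p\log_2\tfrac1p+(1-p)\log_2\tfrac1{1-p}$, together with $2^{\alpha k\,H(1/\alpha)}=\alpha^k\big(1-\tfrac1\alpha\big)^{-(\alpha-1)k}$ (equivalently, Stirling gives $\binom{\alpha k}{k}=\Theta(k^{-1/2})\big(\tfrac{\alpha^\alpha}{(\alpha-1)^{\alpha-1}}\big)^k$, which is the same up to polynomial factors). Combining the two estimates,
$$\frac{\binom{\alpha t}{k}}{\binom{\alpha k}{k}\binom{t}{k}}\le(\alpha k+1)\,e^{o(k)}\Big(\big(1-\tfrac1\alpha\big)^{\alpha-1}\Big)^{k}.$$
The analytic input is that $\big(1-\tfrac1\alpha\big)^{\alpha-1}\to 1/e$ as $\alpha\to\infty$; since $1/e=e^{-1}<e^{-1/4.01}$, one may fix the constant $\alpha$ large enough that $\big(1-\tfrac1\alpha\big)^{\alpha-1}\le e^{-1/4.005}$, after which the right-hand side is $(\alpha k+1)\,e^{o(k)}\,e^{-k/4.005}\le e^{-k/4.01}$ for all large enough $n$.

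The only point requiring care is absorbing the lower-order factors — the $e^{o(k)}$ from $k\ll t$ and the $\poly(k)$ factor from the entropy (or Stirling) bound — into the gap between $e^{-1/4.005}$ and $e^{-1/4.01}$; since that gap contributes a factor $e^{\Theta(k)}$, this is immediate once $n$ is large. In fact there is enormous slack here: for large $\alpha$ the middle factor is essentially $e^{-k}$, so any constant exceeding $4$ — indeed anything exceeding $1$, once $\alpha$ is large — would suffice for the exponent, and the value $4.01$ is well within reach. I therefore do not expect a genuine obstacle: the content is the combinatorial identity in the first step plus two elementary binomial estimates.
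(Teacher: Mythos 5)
Your proof is correct, but it takes a genuinely different route from the paper's. The paper works directly with the product form: it writes $\frac{1}{\binom{t}{k}}\cdot\frac{\binom{T}{\alpha k}}{\binom{T-k}{\alpha k-k}}=\prod_{i=0}^{k-1}\frac{\alpha t-i}{\alpha k-i}\cdot\frac{k-i}{t-i}$, rewrites each factor as $1$ minus a positive quantity, uses $k\ll t$ to lower-bound that quantity by $0.999\big/\bigl(1+\frac{k-i}{i}\frac{\alpha}{\alpha-1}\bigr)$, discards all factors with $i<k/(2-\eps)$, and bounds the surviving $\approx k/2$ factors uniformly; that truncation and uniform bound are exactly where the constant $4.01$ comes from. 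You instead collapse the left-hand side exactly via the identity $\binom{T}{\alpha k}\binom{\alpha k}{k}=\binom{T}{k}\binom{T-k}{\alpha k-k}$ to $\binom{\alpha t}{k}/\binom{\alpha k}{k}$, and then compare with $\binom{t}{k}$ using two standard estimates, $\binom{\alpha t}{k}/\binom{t}{k}\le\alpha^k(1-k/t)^{-k}=\alpha^k e^{o(k)}$ and $\binom{\alpha k}{k}\ge\frac{1}{\alpha k+1}\,\alpha^k(1-1/\alpha)^{-(\alpha-1)k}$; both arguments rely on the same standing hypotheses of \cref{thm:imBGM} ($k\ll t$ for the $e^{o(k)}$ error, and $k\gg\log\log n$ so that $k\to\infty$ absorbs the polynomial factor), which the lemma inherits implicitly and which you invoke correctly. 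What your route buys is transparency and a sharper constant: since the reduction to $\binom{\alpha t}{k}/\bigl(\binom{\alpha k}{k}\binom{t}{k}\bigr)$ is exact, your bound is $\bigl((1-1/\alpha)^{\alpha-1}+o(1)\bigr)^k$, so for $\alpha$ large the exponent constant $4.01$ could be replaced by any constant greater than $1$ (approaching $e^{-k}$), strictly stronger than the paper's footnote claim that any constant above $4$ suffices; the paper's term-by-term estimate is lossier precisely because it keeps only the top portion of the product and bounds it uniformly. One cosmetic remark: the threshold $(1-1/\alpha)^{\alpha-1}\le e^{-1/4.005}\approx 0.78$ already holds at $\alpha=2$ (where the left side is $1/2$), so ``large enough $\alpha$'' is only needed if you want the constant close to $1$.
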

\begin{proof}
\begin{align*}
\displaybreak[1]
\frac{1}{{t \choose k}} \cdot \frac{{T \choose \alpha k}}{{T-k \choose \alpha k - k}}
&= \prod_{i=0}^{k-1} \frac{k-i}{t-i} \cdot \frac{T-i}{\alpha k - i}\\
&= \prod_{i=0}^{k-1} \frac{\alpha t-i}{\alpha k - i} \cdot \frac{k-i}{t-i} \\
& = \prod_{i=1}^{k-1} \left(1 - \frac{i\left(1-\frac1\alpha\right)\left(\frac{1}{k-i}-\frac{1}{t-i}\right)} {1 + \frac{i}{k-i}\left(1-\frac1\alpha\right)}\right)\\
&\leq \prod_{i=1}^{k-1} \left(1 - \frac{0.999}{1+\frac{k-i}{i}\frac{\alpha}{\alpha-1}}\right)
\end{align*}
where the equalities are routine calculation and the inequality is using that $k(n) \ll t(n)$. Each individual term in the product is strictly less than $1$. So, the above is bounded by:
\begin{align*}
&\leq \prod_{i=k/(2-\eps)}^{k-1} \left(1 - \frac{0.999}{1+\frac{k-i}{i}\frac{\alpha}{\alpha-1}}\right)\\
&\leq \left(1-\frac{0.999}{1+(1-\eps)\frac{\alpha}{\alpha-1}}\right)^{\frac{1-\eps}{2-\eps}k}\\
&\leq \exp\left(- \lg e \cdot \frac{0.999(1-\eps)}{(2-\eps)(1+(1-\eps)\frac{\alpha}{\alpha-1})}k\right) \leq e^{-k/4.01}
\end{align*}
for a small enough constant $\eps > 0$ and large enough constant $\alpha > 1$.
\end{proof}

\begin{proof}[Proof of \cref{thm:imBGM}]
Fix $\alpha$ to be a constant that makes the conclusion of \cref{lem:calc} true.

We first check that the invariant is maintained throughout the algorithm that $f \in \cup_{i \in [m]} N_i$. This holds at initiation by the argument given earlier. After that, obviously, if $f \in N_i$, then $f \in N_i(a, f(a))$ for any $a \in \zo^n$, and so the invariant holds. Therefore, if the algorithm terminates, it will find the hidden vector $f$ and return it as the solution. The rate of convergence is precisely captured by the number of mistakes learning algorithm makes, which we describe next.

\paragraph{Mistake Bound.}
Notice that when the algorithm begins, the sum of the sizes of all the affine spaces, $\sum_{i} |N_i| \leq \tilde{O} \bigg(\frac{{T \choose \alpha k}}{{T-k \choose \alpha k -k}}\bigg) 2^{(1+o(1)) k n/t }$. Now whenever the learner makes a mistake by predicting $\hat{y} \neq y$, the size of all affine spaces $\sum_{i} |N_i|$ reduces by a factor of at least $2$. This is due to the definition of $\hat{y}$ and the fact that $|N_i(a,\hat{y})| + |N_i(a,1- \hat{y})| = |N_i|$. 

Hence, using \cref{lem:calc}, after at most 
$$\log \left(\sum_{i} |N_i| \right) \leq \log \left[\tilde{O} \bigg(\frac{{T \choose \alpha k}}{{T-k \choose \alpha k -k}}\bigg) 2^{(1+o(1)) k n/t }\right] \leq (1+o(1))kn/t + \log {t \choose k} - \Omega(k) + \log O\left(\log {t \choose k}\right)$$ mistakes, the size of $\cup_{s \in S} N_s$ will decrease to 1, which by the invariant above will imply that $\cup_{s \in S} N_s = \lbrace f \rbrace$, and hence the learner makes no more mistakes. . Since we assume $k \gg \log \log n$ and $t \ll n$, we can bound the number of mistakes by:
$(1+o(1))kn/t + \log {t \choose k}$

\paragraph{Running Time.}
We analyze the running time of the learner for each round. At each round, for a question $a \in \zo^n$, we need to compute $|N_i(a,0)|$ and $|N_i(a,1)|$ as well as store a representation of the updated $N_i$. Now, since for each $N_i$ is spanned by at most $\ell = (1+o(1))kn/t$ basis vectors, we can treat each $N_i$ as a linear subspace in $\zo^\ell$. $N_i(a,0)$ and $N_i(a,1)$ can be computed by performing Gaussian elimination on a system of linear equations involving $\ell$ variables, which takes $O(\ell^2)$ time. Thus, the total running time is $O(m \ell^2)$, which using \cref{lem:calc} is exactly the bound claimed in \cref{thm:imBGM}.
\end{proof}

\ignore{
\paragraph*{•}
For ease of calculations, let $t = \alpha \frac{n}{c}$ for some $c < \frac{n}{k}$.  Let $\mathcal{T}_1$ be the running time per round of our learning algorithm and $\mathcal{T}_2$ be the running time per round of BGM algorithm. Since for BGM algorithm, $\alpha =1$, we have $\mathcal{T}_2 \leq {n/c \choose k} poly(n)$. Then the running time of our algorithm,

\begin{align*}
\mathcal{T}_1 &= |S| \cdot O(l^2)\\
&\leq O \bigg( \frac{{t \choose \alpha k}}{{t-k \choose \alpha k -k}} k \log \big(\frac{t}{k}\big) \bigg) \cdot O(l^2)\\
&\leq O \bigg( \frac{{t \choose \alpha k}}{{t-k \choose \alpha k -k}} \cdot poly(n) \bigg) \\
&= O \bigg( \frac{{\alpha n/c \choose \alpha k}}{{\alpha n/c -k \choose \alpha k -k}} \frac{1}{{n/c \choose k}} {n/c \choose k}\cdot poly(n) \bigg)\\
&= O \bigg( \frac{{\alpha n/c \choose \alpha k}}{{\alpha n/c -k \choose \alpha k -k}} \frac{1}{{n/c \choose k}} \bigg) \mathcal{T}_2
\end{align*}

Hiding the $O$ notation for legible calculations and rearranging terms, we get
\begin{align*}
\mathcal{T}_1 &\leq \frac{(n/kc) \big( \frac{(\alpha n/c -1)(\alpha n/c -1) \cdots (\alpha n/c -k+1)}{(\alpha k -1)!}\big)}{(n/kc) \big( \frac{(n/c-1)(n/c-2) \cdots (n/c-k+1)}{(k-1)!} \big)}\cdot \mathcal{T}_2\\
&= \bigg(\prod_{i=1}^{k-1} \frac{\alpha \frac{n}{c} - i}{\alpha k -i} \frac{k-i}{\frac{n}{c}-i} \bigg) \cdot \mathcal{T}_2
\end{align*}

It can be seen that each of the term in the above product is less than $1$ when $\alpha > 1$. To simplify the calculations let $r = \frac{n}{c}$. Hence,
\begin{align*}
\mathcal{T}_1 &\leq \bigg( \prod_{i=1}^{k-1} \frac{\alpha + i \frac{\alpha - 1}{r -i}}{\alpha + i \frac{\alpha -1}{k - i}}\bigg)\cdot \mathcal{T}_2\\
& = \bigg( \prod_{i=1}^{k-1} \frac{1 + \frac{i}{r -i}(1 - 1/\alpha)}{1+ \frac{i}{k-i}(1-1/\alpha)}\bigg)\cdot \mathcal{T}_2\\
&= \Bigg( \prod_{i=1}^{k-1}\bigg( 1 - \bigg[ \frac{(\frac{i}{k-i} - \frac{i}{r-i})(1-1/\alpha)}{1 + \frac{i}{k-i}(1 - 1/\alpha)} \bigg]  \bigg)\Bigg)\cdot \mathcal{T}_2\\
\end{align*}

Now, when $r = \frac{n}{c} >> k$, then we can ignore the term $\frac{i}{r - i}$ when compared to $\frac{i}{k-i}$. And since each term is less than $1$, we have
\begin{align*}
\mathcal{T}_1 &\leq \Bigg(\prod_{i=1}^{k-1}\bigg( 1 - \bigg[ \frac{(\frac{i}{k-i})(1-1/\alpha)}{1 + \frac{i}{k-i}(1 - 1/\alpha)} \bigg]  \bigg)\Bigg)\cdot \mathcal{T}_2\\
&<  \Bigg(\prod_{i=2k/3}^{k-1} \bigg( 1 - \frac{1}{1+ \frac{k-i}{i} \frac{\alpha}{\alpha-1}} \bigg)\Bigg) \cdot \mathcal{T}_2
\end{align*}

Notice that the term $\frac{k-i}{i}$ is strictly decreasing for $i = 2k/3$ to $k-1$. The maximum for this term occurs when $i = 2k/3$ and the maximum value is $1/2$.

\begin{align*}
\therefore \mathcal{T}_1 < \bigg( \frac{1}{3 - \frac{2}{\alpha}} \bigg)^{k/3} \cdot \mathcal{T}_2
\end{align*}

Substituting $\alpha =1$ in the case of BGM algorithm, we see that we do not gain anything in the running time. But when $\alpha = 2$, we get
\begin{align*}
 \mathcal{T}_1 =  O( 2^{-k/3}) \cdot \mathcal{T}_2
\end{align*}
which is an exponential in $k$ improvement over BGM algorithm's running time. And, for this case, the number of mistakes the learner makes is at most twice the mistake bound of BGM.

\paragraph*{•}
Using $\alpha = 1 +\epsilon, \epsilon >0$, we have
\begin{align*}
\mathcal{T}_1 &=O \bigg( \frac{1+ \frac{1}{\epsilon}}{3 + \frac{1}{\epsilon}} \bigg)^{k/3} \cdot \mathcal{T}_2 \\
	&= O \bigg( \frac{1}{3} \cdot \Big(\frac{1 + \frac{1}{\epsilon}}{1+ \frac{1}{3\epsilon}} \Big)  \bigg)^{k/3}\cdot \mathcal{T}_2 
\end{align*}

Now, from calculus we have
\begin{align*}
\lim_{\epsilon \rightarrow 0} \bigg(\frac{1 + \frac{1}{\epsilon}}{1+ \frac{1}{3\epsilon}} \bigg)
= 3
\end{align*}

Hence as long as $\epsilon > 0$, we get some exponential in $k$ improvement over BGM that depends on $\epsilon$.
And the mistake bound of our algorithm is at most $(1 + \epsilon)$ times the mistake bound of BGM.
}

\section{In the presence of noise}
Recall the \lpn problem. In this section, we show a reduction from \lpn to noiseless learning of \park  and its applications.

\subsection{The Reduction}
We focus on the case when the noise rate $\eta$ is bounded by a constant less than half.\\

{\noindent
\textbf{\cref{thm:red} (recalled)}
{\em Given an algorithm $\cA$ that learns \park over the uniform distribution with confidence parameter $\delta$ using $s(\delta)$ samples and running time $t(\delta)$,there is an algorithm $\cA'$ that solves the \lpn problem with noise rate $\eta \in (0,\nfrac13)$, using $O(s(\delta/2)\log(1/\delta))$ examples and running time $\exp(O(H(3\eta/2)\cdot s(\delta/2) \cdot \log(1/\delta)))) \cdot (t(\delta/2) + s(\delta/2)\log(1/\delta))$ and with confidence parameter $\delta$.  }}\\

Let $\cA(\delta)$ be a PAC-learning algorithm over the uniform distribution for \park of length $n$ with confidence parameter $\delta$ that draws $s(\delta)$ examples and  runs in time $t(\delta)$.  Below is our algorithm for \lpn. Here, $H$ denotes the binary entropy function $p \mapsto p \log_2(1/p) + (1-p) \log_2(1/(1-p))$. 
\begin{algorithm}
\title{\textsc{Noisy}$(\delta, \eta)$}
\begin{algorithmic}[1]
\State Draw $s' = 20 s(\delta/2) \log(1/\delta)$ random examples $(a_1, \ell_1), \dots, (a_{s'}, \ell_{s'}) \in \zo^n \times \zo$. 
\ForAll{$S \subseteq [s'], |S| \leq \frac{3}{2}\eta s'$}
\For{$i \in [s']$}
\If{$i \in S$}
$\tilde{\ell}_i \gets 1-\ell_i$
\Else
$~\tilde{\ell}_i \gets \ell_i$
\EndIf
\EndFor
\State $x_S \gets \cA(\delta/2)$ applied to examples $(a_1, \tilde{\ell}_1), \dots, (a_{s'}, \tilde{\ell}_{s'})$. 
\EndFor
\State Draw $s'' = 600(s' \cdot H(3\eta/2) + \log(8/\delta))$ random examples $(b_1, m_1), \dots, (b_{s''}, m_{s''}) \in \zo^n \times \zo$
\State $S^* \gets \arg \max_{S \subset [s'], |S| \leq 3\eps s'/2} |\{i \in [s''] : \langle b_i, x_S\rangle = m_i\}|$
\State \Return $x_{S^*}$
\end{algorithmic}
\end{algorithm}

\begin{proof}[Proof of \cref{thm:red}]
\begin{lemma}
The sample complexity of \textsc{Noisy} is $s' + s'' = O(s(\delta/2)\log(1/\delta))$. 
\end{lemma}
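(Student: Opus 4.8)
The plan is a direct bookkeeping argument: the total number of examples \textsc{Noisy} draws is exactly $s' + s''$ (the first batch and the second batch), so it suffices to bound each of $s'$ and $s''$ by $O(s(\delta/2)\log(1/\delta))$.

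For $s'$ there is nothing to do: by definition $s' = 20\, s(\delta/2)\log(1/\delta)$, which is $O(s(\delta/2)\log(1/\delta))$. For $s''$ I would start from its definition $s'' = 600\bigl(s'\cdot H(3\eta/2) + \log(8/\delta)\bigr)$ and use two elementary facts. First, the binary entropy function satisfies $H(p)\le 1$ for all $p\in[0,1]$, so $s'\, H(3\eta/2)\le s'$; crucially this bound is uniform in $\eta$, so the resulting constant does not depend on the noise rate. Second, $\log(8/\delta) = \log 8 + \log(1/\delta) = O(\log(1/\delta))$, and since any PAC-learner must draw at least one example we have $s(\delta/2)\ge 1$, whence $s' = 20\, s(\delta/2)\log(1/\delta)\ge 20\log(1/\delta)$ and thus $\log(8/\delta) = O(s')$. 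Combining, $s'' = O(s') = O(s(\delta/2)\log(1/\delta))$, and hence $s' + s'' = O(s(\delta/2)\log(1/\delta))$.

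There is no genuine obstacle here --- the lemma is a routine calculation whose only subtlety is ensuring the hidden constants are absolute, in particular independent of $\eta$, which the uniform bound $H(3\eta/2)\le 1$ takes care of. The substantive content of \cref{thm:red} lies elsewhere: in the correctness analysis, where one argues that for at least one guessed flip set $S$ the corrected labels $\tilde\ell_i$ are all consistent with the true parity (so that the call to $\cA(\delta/2)$ succeeds with probability at least $1-\delta/2$, and the verification step in line~13 isolates it), and in the running-time bound, where the number of candidate sets $S$, namely those $S\subseteq[s']$ with $|S|\le (3/2)\eta s'$, is estimated by $\exp(O(s'\, H(3\eta/2)))$. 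Those are separate statements from the present one, and I would prove them after disposing of this sample-complexity accounting.
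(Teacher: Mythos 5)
Your accounting is correct and matches the paper, whose proof of this lemma is simply ``Immediate'': $s'$ is $O(s(\delta/2)\log(1/\delta))$ by definition, and $s'' = 600(s'H(3\eta/2)+\log(8/\delta)) = O(s')$ via $H\le 1$ and $\log(8/\delta)=O(s')$. You have merely spelled out the routine bounds the paper leaves implicit.
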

\begin{proof}
Immediate.
\end{proof}

\begin{lemma}
The running time of \textsc{Noisy} is  $2^{O(H(3\eta/2)s(\delta/2)\log(1/\delta))}\cdot (t(\delta/2) + s(\delta/2) \log(1/\delta)))$.
\end{lemma}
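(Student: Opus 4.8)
The plan is to bound the cost of \textsc{Noisy} line by line; the only nontrivial ingredient is a standard entropy estimate for a truncated binomial sum, and every other line contributes only $\poly(n)$ factors times $s'$ or $s''$.

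First I would record the two sample sizes, $s' = 20\,s(\delta/2)\log(1/\delta)$ and $s'' = 600\,(s'H(3\eta/2)+\log(8/\delta))$, and note that since $\log(1/\delta)=O(s')$ and $H(3\eta/2)\le 1$ we have $s'' = O(s') = O(s(\delta/2)\log(1/\delta))$. Hence drawing the $s'$ and $s''$ examples, the inner relabeling loop over $i\in[s']$, and each consistency count $|\{i\in[s'']:\langle b_i,x_S\rangle = m_i\}|$ cost $O(s(\delta/2)\log(1/\delta))$ up to a $\poly(n)$ factor (which, following the conventions in the statement, we suppress), while a single invocation of $\cA(\delta/2)$ costs $t(\delta/2)$ by hypothesis.

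The key step is to count the iterations of the two loops over $S$, i.e.\ to bound
$$\bigl|\{\,S\subseteq[s']:|S|\le \tfrac32\eta s'\,\}\bigr| \;=\; \sum_{j=0}^{\lfloor 3\eta s'/2\rfloor}\binom{s'}{j}.$$
Because $\eta<\tfrac13$, the truncation point satisfies $\tfrac32\eta<\tfrac12$, so the classical bound $\sum_{j=0}^{\lfloor\lambda N\rfloor}\binom{N}{j}\le 2^{H(\lambda)N}$ for $\lambda\le\tfrac12$ applies with $\lambda=\tfrac32\eta$ and $N=s'$, giving at most $2^{H(3\eta/2)\,s'}$ sets. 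Substituting $s'=20\,s(\delta/2)\log(1/\delta)$ converts this into $2^{O(H(3\eta/2)\,s(\delta/2)\log(1/\delta))}$ iterations for each loop.

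Finally I would assemble the total. The main loop runs $2^{O(H(3\eta/2)s(\delta/2)\log(1/\delta))}$ times, each iteration doing $O(s')$ work to relabel plus one call to $\cA(\delta/2)$, for a subtotal of $2^{O(H(3\eta/2)s(\delta/2)\log(1/\delta))}\cdot\bigl(t(\delta/2)+O(s(\delta/2)\log(1/\delta))\bigr)$; the verification loop (computing the $\arg\max$ over the same family, reusing — or recomputing — the $x_S$'s) contributes $2^{O(H(3\eta/2)s(\delta/2)\log(1/\delta))}\cdot O(s'')$, which is absorbed; and the two sampling steps are negligible. Summing gives exactly the claimed $2^{O(H(3\eta/2)s(\delta/2)\log(1/\delta))}\cdot\bigl(t(\delta/2)+s(\delta/2)\log(1/\delta)\bigr)$. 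There is essentially no obstacle beyond getting the entropy bound right and observing that it is precisely the hypothesis $\eta<\tfrac13$ (hence $\tfrac32\eta<\tfrac12$) that makes $2^{H(3\eta/2)s'}$ a meaningful upper bound — without it the truncated binomial sum would be $\Theta(2^{s'})$ up to polynomial factors and the reduction would be vacuous.
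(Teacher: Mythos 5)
Your proposal is correct and follows the same route as the paper: the paper's proof simply invokes the standard estimate $\sum_{i \le \alpha x} {x \choose i} \le 2^{H(\alpha)x}$ for $\alpha \le \tfrac12$ to bound the number of candidate sets $S$, and declares the bound immediate. Your write-up just spells out the per-iteration accounting (relabeling, the call to $\cA(\delta/2)$, and the verification pass) that the paper leaves implicit.
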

\begin{proof}
We use the standard estimate $\sum_{i=0}^{\alpha x} {x \choose i} \leq 2^{H(\alpha)x}$ for $\alpha \leq \frac12$. The bound is then immediate. 
\end{proof}

\begin{lemma}
If $x$ is the hidden vector and $x^*$ is output by $\textsc{Noisy}(\delta)$, then with probability at least $1-\delta$, $x^* = x$. 
\end{lemma}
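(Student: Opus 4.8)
The plan is to enumerate a short list of ``bad events'' and bound each in turn. Write $x\in\zo^n$ for the hidden $k$-sparse target, and let $S_0\subseteq[s']$ be the (random) set of indices on which the first batch of $s'$ examples is mislabeled. Since each label is flipped independently with probability $\eta$ and $s'=20\,s(\delta/2)\log(1/\delta)$, a multiplicative Chernoff bound gives $\Pr[\,|S_0|>\tfrac32\eta s'\,]\le\delta/8$. On the complement of this event $S_0$ is one of the sets $S$ that the algorithm enumerates, and for the run with $S=S_0$ the corrected labels satisfy $\tilde\ell_i=\langle a_i,x\rangle$ for every $i$; that is, $\cA(\delta/2)$ is fed an honest, noiseless, uniformly distributed sample for the parity $x$ (of size $s'\ge s(\delta/2)$; extra i.i.d.\ examples do not hurt). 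Because two distinct parities disagree on exactly half of $\zo^n$, a PAC learner for \park with approximation parameter $\tfrac12$ must output $x$ exactly, so $x_{S_0}=x$ with probability at least $1-\delta/2$. Hence, except with probability $\delta/8+\delta/2$, the candidate list $\{x_S\}$ contains $x$.

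Next I would record that this candidate list has at most $\sum_{i\le\frac32\eta s'}{s'\choose i}\le 2^{H(3\eta/2)s'}$ elements, a number I will call $N$ (this uses $\tfrac32\eta\le\tfrac12$, valid since $\eta<\tfrac13$, and is exactly the estimate already used for the running-time lemma). The crux is the verification step using the second, independent batch of $s''$ examples. For the true hypothesis $x$, each fresh example $(b_i,m_i)$ satisfies $\langle b_i,x\rangle=m_i$ independently with probability $1-\eta$. For any fixed value $v\ne x$ that some $x_S$ might take, on a fresh uniform $b_i$ we have $\langle b_i,v\rangle=m_i$ iff $\langle b_i,v\oplus x\rangle$ equals the noise bit; since $v\oplus x\ne 0$, the inner product $\langle b_i,v\oplus x\rangle$ is an unbiased coin independent of the noise, so the number of agreements of $v$ is distributed as $\mathrm{Binomial}(s'',\tfrac12)$. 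This is the one genuinely parity-specific step, and it uses both that we work over $\F_2$ and that $b_i$ has the uniform marginal.

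Now condition on the first batch and on $\cA$'s internal randomness, so that the candidate values $\{x_S\}$ are fixed and independent of the second batch. A Hoeffding bound shows the true hypothesis gets at least $(1-\eta-\gamma)s''$ agreements except with probability $\delta/8$, and a Hoeffding bound plus a union bound over the $\le N$ candidate values shows every $v\ne x$ among them gets at most $(\tfrac12+\gamma)s''$ agreements except with probability $\delta/8$; here $\gamma$ is a small constant fraction of $\tfrac12-\eta$, which is $>\tfrac16>0$ because $\eta<\tfrac13$, so $(1-\eta-\gamma)>(\tfrac12+\gamma)$, and the choice $s''=600(s'H(3\eta/2)+\log(8/\delta))=\Theta(\log N+\log(1/\delta))$ makes the union bound go through with room to spare (the generous constant $600$ absorbs the $\ln 2$ and $\gamma^{-2}$ factors). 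On the intersection of the good events, the maximizer $S^*$ must have $x_{S^*}=x$, since $x$ strictly out-scores every wrong candidate. A final union bound over the $O(1)$ bad events gives total failure probability at most $\delta$, as required.

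The step I expect to be the main obstacle is the verification: one must reliably single out the true parity from a list that may be exponentially large in $s'$, using only $\poly$-sized fresh evidence. What makes it work is that a wrong parity matches the noisy labels with probability exactly $\tfrac12$ rather than at some distance-dependent rate, so its expected score is separated from the truth's by the constant $\tfrac12-\eta$, bounded away from $0$ precisely because $\eta<\tfrac13$; combined with taking $s''$ proportional to $\log N\asymp s'H(3\eta/2)$, this lets Hoeffding and a union bound over all candidates close the argument. The remaining ingredients — the Chernoff bound on the number of corrupted examples, the PAC guarantee of $\cA$ on the one correctly-decoded sub-instance, and the cardinality bound on the candidate list — are routine.
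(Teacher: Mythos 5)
Your proof is correct and follows essentially the same route as the paper's: a Chernoff bound on the number of corrupted first-batch examples, the PAC guarantee (with $\eps=\nfrac12$, so the learner must output $x$ exactly) applied to the run where the corruption set is guessed correctly, and a verification step in which the true parity beats every wrong candidate because a wrong parity agrees with the noisy labels with probability exactly $\nfrac12$, closed by a union bound over the at most $2^{H(3\eta/2)s'}$ candidates (your two-threshold Hoeffding argument is just a reparametrization of the paper's fixed $5s''/12$ threshold). The one point the paper makes explicit that you omit is the degenerate regime $\eta=o(1/s(\delta/2))$: there the claim $\Pr[\,|S_0|>\tfrac32\eta s'\,]\le\delta/8$ can fail since the mean $\eta s'$ is too small (and then only $S=\emptyset$ is enumerated), which the paper handles by assuming $\eta=\Omega(1/s(\delta/2))$, the complementary case being trivial because then no example is mislabeled with high probability.
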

\begin{proof}
We make the assumption throughout that $\eta = \Omega(1/s(\delta/2))$, as otherwise, with high probability, the example source won't mislabel any of $s(\delta/2)$ samples and so the reduction is trivial.

Let $T = \{i \in [s'] : \langle a_i, x\rangle \neq \ell_i\}$ be the subset of the $s'$ samples drawn in line 1 that are mislabeled by the example source. By the Chernoff bound:
$\Pr[|T| > 3\eta s'/2] \leq e^{-\eta s'/12} \leq \delta/4$.
If $|T| \leq 3\eta s'/2$, we have with probability at least $1-\delta/2$, $x_T = x$. Thus, for any $i \in [s''], \Pr_{b_i}[\langle x_T, b_i\rangle \neq m_i] \leq \eta$. On the other hand, for all $x_S \neq x_T$, $\Pr_{b_i}[\langle x_S, b_i\rangle \neq \langle x, b_i \rangle]= 1/2$, and so $\Pr_{b_i}[\langle x_S, b_i \rangle \neq m_i ] =1/2$ as the noise is random. Again, using Chernoff bounds, 
$$Pr[\exists S \neq T \text{ s.t.} |\{i \in [s''] : \langle b_i, x_S\rangle \neq m_i\}| \leq 5s''/12] \leq 2^{H(3\eta/2)s'} \cdot e^{-s''/450} < \frac\delta 8$$
On the other hand, for $x_T$ itself,  $\Pr[|\{i \in [s''] : \langle b_i, x_T\rangle \neq m_i\}| > 5s''/12] < \frac\delta 8$ by a similar use of Chernoff bounds. So, in all, with probability at least $1-\delta$, $x_T$ will be returned in step 12. 
\end{proof}
\end{proof}

When the noise rate $\eta$ is more than $1/3$, a similar reduction can be given by adjusting the parameters accordingly. Also, when the distribution is arbitrary but the noise rate is less than $1/4$, a similar reduction can be made to work. In the latter case, $\cA$ is invoked with a smaller approximation parameter than the one given to \textsc{Noisy} so that the filtering step in line 10 works.

\subsection{Applications}
An immediate application of \cref{thm:red} is obtained by letting $\cA$ be the current fastest known attribute-efficient algorithm for learning \park, the algorithm due to Spielman\footnote{Though a similar algorithm was also proposed by Hopper and Blum \cite{HB01}} \cite{KS06} that makes $O(k \log n)$ samples and takes $O({n \choose k/2})$ time (for constant confidence parameter $\delta$). (We ignore the confidence parameter in this section for simplicity.) \\

{\noindent
\textbf{Corollary \ref{cor:att} (recalled)}
{\em For any $\eta \in (0,\nfrac13)$ and constant confidence parameter, there is an algorithm for \lpn with sample complexity $O(k \log n)$ and running time ${n \choose k/2}^{1 + O(H(1.5\eta))}$.} 
}
\begin{proof}
Immediate from \cref{thm:red}.
\end{proof}

Our next application of \cref{thm:pac} uses our improved \park learning algorithm from \cref{sec:nless}.\\

{\noindent
\textbf{Corollary \ref{cor:mainapp} (recalled)}
{\em Suppose $k(n) = n/f(n)$ for some function $f: \N \to \N$ for which $1 \ll f(n) \ll n/\log\log n$, and suppose $\eta(n) = o(1/((f(n))^\alpha \log n))$ for some $\alpha\in [1/2,1)$. Then, for constant confidence parameter, there exists an algorithm for \lpn with noise rate $\eta$ with running time $e^{-k/4.01 + o(k)} \cdot {n \choose k}^{1-\alpha} \cdot \poly(n)$ and sample complexity $O(k (f(n))^\alpha)$. }
}
\begin{proof}
Let $\cA$ be the algorithm of \cref{thm:pac} with $t(n) = \lceil n/(f(n))^\alpha \rceil$. The running time of $\cA$ is $e^{-k/4.01} \cdot {n \choose k}^{1-\alpha} \cdot \poly(n)$ and its sample complexity is $O(k\cdot f(n))^\alpha)$. Now, applying \cref{thm:red}, we see that since $H(1.5\eta) = o((f(n))^{-\alpha})$, the running time for \textsc{Noisy} is only a $2^{o(k)}$ factor times the running time of $\cA$. This yields our desired result.

\end{proof}

\bibliographystyle{alpha}
\bibliography{papers}

\end{document}